\newtheorem{thm}{Theorem}[section]
\newtheorem{cor}[thm]{Corollary}
\newtheorem{prop}[thm]{Proposition}
\theoremstyle{definition}
\journal{Journal of \LaTeX\ Templates}
\makeatletter \@addtoreset{equation}{section}
\renewcommand{\theequation}{\arabic{section}.\arabic{equation}}
\begin{document}

\begin{frontmatter}

\title{Riemann-Hilbert approach and $N$-soliton solutions for the three-component coupled Hirota equations}
\tnotetext[mytitlenote]{Project supported by the Fundamental Research Fund for the Central Universities under the grant No. 2019ZDPY07.\\
\hspace*{3ex}$^{*}$Corresponding author.\\
\hspace*{3ex}\emph{E-mail addresses}: xwu@cumt.edu.cn (X. Wu), sftian@cumt.edu.cn,
shoufu2006@126.com (S. F. Tian),
and jinjieyang@cumt.edu.cn (J.J. Yang)}

\author{Xin Wu, Shou-Fu Tian$^{*}$, Jin-Jie Yang}
\address{
School of Mathematics and Institute of Mathematical Physics, China University of Mining and Technology, Xuzhou 221116, People's Republic of China
}

\begin{abstract} In this work, we consider an integrable three-component coupled Hirota (tcCH) equations in detail via the Riemann-Hilbert (RH) approach. We present some properties of the spectral problems of the tcCH equations with $4\times4$ the Lax pair. Moreover, the RH problem of the equations is established via analyzing the analyticity of the spectrum problem. By studying the symmetry of the spectral problem, we get the spatiotemporal evolution of scattering data. Finally, the $N$-soliton solution is derived by solving the RH problem with reflectionless case.
According to the resulting $N$-soliton  solution, the influences of each parameters on collision dynamic behaviors between solitons are discussed, and the method of how to control the  interactions are suggested by some graphic analysis.  In addition, some new phenomenon for soliton collision is presented including localized structures and dynamic behaviors of one- and two- soliton solutions, which can help enrich the nonlinear dynamics of the $N$-component nonlinear Schr\"{o}dinger type equations.
\end{abstract}

\begin{keyword}
Three-component coupled Hirota equations \sep Riemann-Hilbert approach \sep  $N$-soliton solution.
\end{keyword}

\end{frontmatter}


\section{Introduction}
Hirota equation describing plane selffocusing and one-dimensional self-modulation of waves \cite{Hirota-1973} is further developed based on nonlinear Schr\"{o}dinger equation which is a kind of very important model in hydrodynamics, optical fiber transmission, etc. The Hirota equation reads \cite{Ankiewicz-2010}
\begin{align}
iq_{t}+\frac{1}{2}q_{tt}+|q|^{2}q-i\alpha_{3}q_{ttt}-i6\alpha_{3}|q|^{2}q_{t}=0,
\end{align}
in dimensionless form. The Hirota equation reduces the nonlinear Schr\"{o}dinger equation when $\alpha_{3}=0$.
After that, the coupled Hirota equations are proposed and considered to describer the pulse propagation in a coupled fiber with higher-order dispersion and self-steepening. The equations read
\begin{align}
&\left\{\begin{aligned}
&iu_{t}+\frac{1}{2}u_{xx}+\left(|u|^{2}+|v|^{2}\right)u+i\epsilon\left[u_{xxx}+(6|u|^{2}+3|v|^{2})u_{x}+3uv^{*}v_{x}\right]=0,\\
&iv_{t}+\frac{1}{2}v_{xx}+\left(|u|^{2}+|v|^{2}\right)v+i\epsilon\left[v_{xxx}+(6|u|^{2}+3|v|^{2})v_{x}+3vu^{*}u_{x}\right]=0.
\end{aligned}\right.
\end{align}
Its Darboux transformation \cite{Tasgal-1992} 
has been studied, which yields  a lot of rich and good properties for  the coupled Hirota equations.
But the coupled Hirota equations can not be used to solve the compatibility problems of the wavelength division multiplexing and linear system \cite{Bindu-2001}, then a new form of Hirota system is proposed which is the three-component coupled Hirota equations (tcCH)
\begin{align} \label{Q1}
&\left\{ \begin{aligned}
iq_{1t}&+\frac{1}{2}q_{1xx}+\left(|q_{1}|^{2}+|q_{2}|^{2}+|q_{3}|^{2}\right)|q_{1}| \\
&+i\epsilon\left[q_{1xxx}+3\left(2|q_{1}|^{2}+|q_{2}|^{2}+|q_{3}|^{2}\right)
q_{1x}+3q_{1}\left(q_{2}^{\ast}q_{2x}+q_{3}^{\ast}q_{3x}\right)\right]=0,
\\
iq_{2t}&+\frac{1}{2}q_{2xx}+\left(|q_{1}|^{2}+|q_{2}|^{2}+|q_{3}|^{2}\right)|q_{2}|\\
&+i\epsilon\left[q_{2xxx}+3\left(2|q_{2}|^{2}+|q_{3}|^{2}+|q_{1}|^{2}\right)
q_{2x}+3q_{2}\left(q_{3}^{\ast}q_{3x}+q_{1}^{\ast}q_{1x}\right)\right]=0,
\\
iq_{3t}&+\frac{1}{2}q_{3xx}+\left(|q_{1}|^{2}+|q_{2}|^{2}+|q_{3}|^{2}\right)|q_{3}|\\
&+i\epsilon\left[q_{3xxx}+3\left(2|q_{3}|^{2}+|q_{1}|^{2}+|q_{2}|^{2}\right)
q_{3x}+3q_{3}\left(q_{1}^{\ast}q_{1x}+q_{2}^{\ast}q_{2x}\right)\right]=0,
     \end{aligned}  \right.
\end{align}
where $q_{1}(x)$, $q_{2}(x)$ and $q_{3}(x)$ are complex envelops, $q^{*}_{i}(i=1,2,3)$ is the complex conjugate of $q_{i}$, and $\epsilon$ denotes the strength of high-order effects which is a small dimensionless real parameter. The Lax pair of the tcCH equations has been derived by  Bindu \cite{Bindu-2001}, as well as rogue wave and breather wave solutions of the tcCH equations \eqref{Q1} have been obtained in \cite{T. Xu-2017}.

The inverse scattering transformation is a powerful analytical tool to solve integrable systems, which plays an indispensable role in the field of nonlinear sciences. Riemann-Hilbert (RH)  approach is developed by Zakharov et al \cite{Zakharov-1984} based on the theory of inverse scattering transformation which is applied to the field of integrable systems. In recent years, RH approach has been used to study a lot of works in solving integrable models \cite{RH-1}-\cite{Tian-NZyang}. The main purpose of this work is to find more abundant  $N$-soliton solutions of the tcCH equations \eqref{Q1}, and revealing the propagation behavior of the solutions via the RH approach.

The structure of this work is given as follows. In section 2, we analyze the spectrum problem of the tcCH equations in detail and get the analytical propertes of Jost functions. In section 3, the RH problem is established based on the previous conclusions. Moreover, we study the symmetry of scattering matrix and the time-spatial revolutions of the scattering data. In section 4, we can derive the $N$-soliton solution of the tcCH equations \eqref{Q1} via solving the resulting RH problem. In addition, the propagation behavior of soliton solution is analyzed by taking the single-soliton  and two-soliton solutions for examples. Some conclusions and discussions are presented in the final section.

\section{Direct Scattering Transform}

In this section, we shall investigate the RH problem of the Eq.\eqref{Q1} via the direct scattering transform. The Lax pair of the tcCH equations reads
\begin{align}\label{Q3}
&\left\{ \begin{aligned}
&\Phi_{x}=U\Phi, U=\lambda U_{0}+U_{1},\\
&\Phi_{t}=V\Phi, V=\lambda^{3}V_{0}+\lambda^{2}V_{1}+\lambda V_{2}+V_{3},
     \end{aligned}  \right.
\end{align}
where \begin{align*}
&U_{0}=\frac{1}{12\epsilon}\left(\begin{array}{cccc}
    -2i  & 0 &  0 & 0 \\
    0 & i & 0 & 0  \\
    0 & 0 & i & 0 \\
    0 & 0 & 0 & i\\
  \end{array}\right),
  U_{1}=\left(\begin{array}{cccc}
    0 & -q_{1} & -q_{2} & -q_{3} \\
    q_{1}^{\ast} & 0 & 0 & 0  \\
    q_{2}^{\ast} & 0 & 0 & 0 \\
    q_{3}^{\ast} & 0 & 0 & 0\\
 \end{array}\right),
  \end{align*}
\begin{align*}
V_{0}=\frac{1}{16\epsilon}U_{0},
 V_{1}=\frac{1}{8\epsilon}U_{0}+\frac{1}{16\epsilon}U_{1},
 \end{align*}
\begin{align*}
V_{2}=\frac{1}{4}\left(\begin{array}{cccc}
    ie  & -\frac{q_{1}}{2\epsilon}-iq_{1x} & -\frac{q_{2}}{2\epsilon}-iq_{2x} & -\frac{q_{3}}{2\epsilon}-iq_{3x} \\
    \frac{q_{1}^{\ast}}{2\epsilon}-i{q^{\ast}_{1x}} & -i|q_{1}|^{2} & -i{q_{1}}^{\ast}q_{2} & -i{q_{1}}^{\ast}q_{3}  \\
    \frac{q_{2}^{\ast}}{2\epsilon}-i{q^{\ast}_{2x}} & -i{q_{2}}^{\ast}q_{1} & -i|q_{2}|^{2} & -i{q_{2}}^{\ast}q_{3} \\
    \frac{q_{3}^{\ast}}{2\epsilon}-i{q^{\ast}_{3x}} & -i{q_{3}}^{\ast}q_{1} & -i{q_{3}}^{\ast}q_{2} & -i|q_{3}|^{2}\\
  \end{array}\right),
 \end{align*}
\begin{align*}
  V_{3}=\left(\begin{array}{cccc}
    \epsilon\left(e_{1}+e_{2}+e_{3}\right)+\frac{i}{2}e & \epsilon e_{4}-\frac{i}{2}q_{1x} &
    \epsilon e_{5}-\frac{i}{2}q_{2x} & \epsilon e_{6}-\frac{i}{2}q_{3x} \\
    -\epsilon e^{\ast}_{4}-\frac{i}{2}q^{\ast}_{1x} & -\epsilon e_{4}-\frac{i}{2}|q_{1}|^{2} &
     \epsilon e_{7}-\frac{i}{2}q^{\ast}_{1}q_{2} & \epsilon e_{8}-\frac{i}{2}q^{\ast}_{1}q_{3}  \\
    -\epsilon e^{\ast}_{5}-\frac{i}{2}q^{\ast}_{2x} & -\epsilon e^{\ast}_{7}-\frac{i}{2}q^{\ast}_{2}q_{1} &
    -\epsilon e_{2}-\frac{i}{2}|q_{2}|^{2} & \epsilon e_{9}-\frac{i}{2}q^{\ast}_{2}q_{3} \\
    -\epsilon e^{\ast}_{6}-\frac{i}{2}q^{\ast}_{3x} & -\epsilon e^{\ast}_{8}-\frac{i}{2}q^{\ast}_{3}q_{1} &
     -\epsilon e^{\ast}_{9}-\frac{i}{2}q^{\ast}_{3}q_{2} & -\epsilon e_{3}-\frac{i}{2}|q_{3}|^{2}\\
 \end{array}\right),
  \end{align*}
with
\begin{align*}
&e=|q_{1}|^{2}+|q_{2}|^{2}+|q_{3}|^{2},
e_{1}=q_{1}q^{\ast}_{1x}-q_{1x}q^{\ast}_{1},
e_{2}=q_{2}q^{\ast}_{2x}-q_{2x}q^{\ast}_{2},\\
&e_{3}=q_{3}q^{\ast}_{3x}-q_{3x}q^{\ast}_{3},
e_{4}=q_{1xx}+2eq_{1},
e_{5}=q_{2xx}+2eq_{2},
e_{6}=q_{3xx}+2eq_{3},\\
&e_{7}=q^{\ast}_{1}q_{2x}-q^{\ast}_{1x}q_{2},
e_{8}=q^{\ast}_{1}q_{3x}-q^{\ast}_{1x}q_{3},
e_{9}=q^{\ast}_{2}q_{3x}-q^{\ast}_{2x}q_{3},
\end{align*}
where $\Phi=\Phi(x,t;\lambda)$ is column vector function, and $\lambda$ is the complex spectral parameter. Eq. \eqref{Q1} satisfies zero curvature equation $U_{t}-V_{x}+[U,V]=0$, which is the compatibility condition of the lax pair \eqref{Q3}.

For the convenience of calculations, we introduce a new Jost function $J=J(x,t;\lambda)$
\begin{align} \label{Q4}
\Phi=Je^{\frac{1}{12\epsilon}+\left(\frac{1}{192\epsilon^{2}}i\lambda^{3}+\frac{1}{96\epsilon^{2}}i\lambda^{2}\right)\sigma t},
\end{align}
where $\sigma=diag(-2,1,1,1)$. According to Eq. \eqref{Q4}, the lax pairs can be converted to
\begin{align}\label{Q5}
\left\{ \begin{aligned}
&J_{x}=i\lambda c_{1}[\sigma,J]+QJ,\\
&J_{t}=i\left(c_{2}\lambda^{3}+c_{3}\lambda^{2}\right)[\sigma,J]+\tilde{V}J,
     \end{aligned} \right.
\end{align}
where
$$Q=U_{1},\tilde{V}=\lambda V_{2}+V_{3},c_{1}=\frac{1}{12\epsilon},c_{2}=\frac{1}{192\epsilon^{2}},c_{3}=\frac{1}{96\epsilon^{2}},$$
and $[\sigma,J]$ implies that $[\sigma,J]=\sigma J-J\sigma$ .

In the study of the symmetry of matrix $Q$, we only consider the first expression of Eq.\eqref{Q5}. The another expression is useful in the process of inverse scattering, so it is temporarily omitted.

Next let us consider two solutions $J_{\pm}=J_{\pm}(x,\lambda)$ of the first expression of Eq.\eqref{Q5} for $\lambda\in\mathbb{R}$

\begin{align}\label{Q6}
\left\{\begin{aligned}
J_{-}=\left([J_{-}]_{1},[J_{-}]_{2},[J_{-}]_{3},[J_{-}]_{4}\right),\\
J_{+}=\left([J_{+}]_{1},[J_{+}]_{2},[J_{+}]_{3},[J_{+}]_{4}\right),
\end{aligned}\right.
\end{align}
with the asymptotic conditions
\begin{align}\label{Q7}
\begin{split}
&J_{-}\rightarrow\mathbb{I}, \qquad x\rightarrow -\infty,\\
&J_{+}\rightarrow\mathbb{I}, \qquad x\rightarrow +\infty,
\end{split}
\end{align}
where $\mathbb{I}$ represents a $4\times4$ indentity matrix.

Next, we study the analytic properties of $J_{\pm}(x,\lambda)$ and give the following proposition.

\begin{prop}
\begin{align}
[J_{-}]_{1},[J_{+}]_{2},[J_{+}]_{3},[J_{+}]_{4}
 \end{align}
 allow analytic extensions to the upper half $\lambda$-plane $\mathbb{C}^{+}$;
\begin{align}
[J_{+}]_{1},[J_{-}]_{2},[J_{-}]_{3},[J_{-}]_{4}
 \end{align}
 allow analytic extensions to the lower half $\lambda$-plane $\mathbb{C}^{-}$.
\end{prop}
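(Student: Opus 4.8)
The plan is to recast the first equation of \eqref{Q5} as a pair of Volterra integral equations and then read off the half-planes of analyticity from the signs appearing in the exponential kernel. Writing $\widehat{\sigma}$ for the operator $\widehat{\sigma}A=[\sigma,A]$, the equation $J_x=i\lambda c_1[\sigma,J]+QJ$ together with the asymptotics \eqref{Q7} is equivalent to
\begin{align}\label{Pa}
&J_-(x,\lambda)=\mathbb{I}+\int_{-\infty}^{x}e^{i\lambda c_1(x-y)\widehat{\sigma}}\bigl(Q(y)J_-(y,\lambda)\bigr)\,dy,\\
\label{Pb}
&J_+(x,\lambda)=\mathbb{I}-\int_{x}^{+\infty}e^{i\lambda c_1(x-y)\widehat{\sigma}}\bigl(Q(y)J_+(y,\lambda)\bigr)\,dy.
\end{align}
Because $\sigma=\mathrm{diag}(-2,1,1,1)$, the kernel $e^{i\lambda c_1(x-y)\widehat{\sigma}}$ acts on the $(k,j)$ entry of a matrix by multiplication by $e^{i\lambda c_1(\sigma_k-\sigma_j)(x-y)}$, where $\sigma_k-\sigma_j=0$ for $k=j$ and for $k,j\in\{2,3,4\}$, $\sigma_k-\sigma_j=+3$ for $k\in\{2,3,4\}$ and $j=1$, and $\sigma_k-\sigma_j=-3$ for $k=1$ and $j\in\{2,3,4\}$.

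Now fix a column index and iterate. In \eqref{Pa} the equation for $[J_-]_1$ involves only the kernel entries $(k,1)$, hence only the factors $1$ (for $k=1$) and $e^{3i\lambda c_1(x-y)}$ (for $k=2,3,4$). On the domain $y\le x$ we have $x-y\ge 0$, so with $c_1=1/(12\epsilon)>0$ the modulus $|e^{3i\lambda c_1(x-y)}|=e^{-3c_1(x-y)\mathrm{Im}\,\lambda}$ is $\le 1$ precisely when $\mathrm{Im}\,\lambda\ge 0$. Provided $q_1,q_2,q_3$ decay rapidly enough as $|x|\to\infty$ (it suffices that $Q\in L^1(\mathbb{R})$), the Neumann iteration of \eqref{Pa} then converges absolutely and uniformly for $x\in\mathbb{R}$ and $\lambda$ in compact subsets of $\overline{\mathbb{C}^+}$; since every iterate is analytic in $\mathbb{C}^+$, so is the limit $[J_-]_1$. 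For $[J_-]_2,[J_-]_3,[J_-]_4$ the only nontrivial factor is $e^{-3i\lambda c_1(x-y)}$, coming from the $(1,j)$ entry, and on $y\le x$ this is bounded exactly when $\mathrm{Im}\,\lambda\le 0$, so these three columns extend analytically to $\mathbb{C}^-$. For $J_+$ the domain of integration in \eqref{Pb} forces $x-y\le 0$, which reverses every inequality: the factor $e^{3i\lambda c_1(x-y)}$ relevant to $[J_+]_1$ is now bounded for $\mathrm{Im}\,\lambda\le 0$, so $[J_+]_1$ extends to $\mathbb{C}^-$, while the factor $e^{-3i\lambda c_1(x-y)}$ relevant to $[J_+]_2,[J_+]_3,[J_+]_4$ is bounded for $\mathrm{Im}\,\lambda\ge 0$, so those extend to $\mathbb{C}^+$. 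Collecting the four statements gives the proposition.

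The routine but essential point — the one I would verify most carefully — is the uniform convergence of the Neumann series up to the real axis: one bounds the $n$-th iterate by $\tfrac{1}{n!}\bigl(\int_{\mathbb{R}}\|Q(y)\|\,dy\bigr)^n$, which is legitimate exactly because, on the closed half-plane appropriate to the column in question, all surviving exponential factors have modulus $\le 1$ (this is where the sign bookkeeping above and the decay of the $q_i$ enter). Analyticity in the open half-plane, and continuity up to $\mathbb{R}$, then follow term-by-term from Morera's theorem together with the Weierstrass convergence theorem. (If one does not assume $\epsilon>0$ then $c_1<0$ and all the half-planes are simply interchanged.)
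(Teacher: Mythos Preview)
Your proof is correct and follows essentially the same route as the paper: both convert the $x$-equation in \eqref{Q5} into the Volterra integral equations \eqref{Q8}--\eqref{Q9} and decide the half-plane of analyticity for each column by inspecting the sign of the exponent in the kernel $e^{ic_1\lambda(x-y)\widehat{\sigma}}$. Your version is in fact more careful than the paper's, which only displays $e^{ic_1\lambda\widehat{\sigma}(x-y)}Q(y)$ and asserts the conclusion; you make explicit that the $j$-th column closes on itself, track the relevant entries $(k,j)$, supply the Neumann-series bound $\tfrac{1}{n!}\bigl(\int\|Q\|\bigr)^n$ justifying convergence up to the real axis, and note the (tacit) hypothesis $\epsilon>0$.
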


\begin{proof}
Two solutions are completely determined by Volterra integrable equations
\begin{align}
&J_{-}(x,\lambda)=\mathbb{I}+\int_{-\infty}^{x}
e^{ic_{1}\lambda\sigma(x-y)}Q(y)J_{-}(y,\lambda)e^{-ic_{1}\lambda\sigma(x-y)}dy, \label{Q8} \\
&J_{+}(x,\lambda)=\mathbb{I}-\int_{x}^{+\infty}
e^{ic_{1}\lambda\sigma(x-y)}Q(y)J_{+}(y,\lambda)e^{-ic_{1}\lambda\sigma(x-y)}dy.\label{Q9}
\end{align}
According to \cite{Biondini-2014}, the analyticity of $J_{\pm}$ is equivalent to that of integrand function
\begin{align*}
e^{ic_{1}\lambda\hat{\sigma}(x-y)}Q(y)=\left(\begin{array}{cccc}
    0 & -q_{1}e^{-3ic_{1}(x-y)} & -q_{2}e^{-3ic_{1}(x-y)} & -q_{3}e^{-3ic_{1}(x-y)} \\
    q_{1}^{\ast}e^{3ic_{1}(x-y)} & 0 & 0 & 0  \\
    q_{2}^{\ast}e^{3ic_{1}(x-y)} & 0 & 0 & 0 \\
    q_{3}^{\ast}e^{3ic_{1}(x-y)} & 0 & 0 & 0\\
 \end{array}\right),
\end{align*}
where $e^{\hat{\sigma}}A=e^{\sigma}Ae^{-\sigma}$.

To find the analytic area of each column, we just consider $Re[3ic_{1}¦Ë(x-y)] < 0$ and $Re[-3ic_{1}¦Ë(x-y)] < 0$. Because of $x-y>0$ in Eq.\eqref{Q8}, we can easily see that $[J_{-}]_{1}$ allows analytic extensions to the upper half $\lambda$-plane $\mathbb{C}^{+}$; $[J_{-}]_{2},[J_{-}]_{3}$ as well as $[J_{-}]_{4}$ allow analytic extensions to the lower half $\lambda$-plane $\mathbb{C}^{-}$. Similarly, $[J_{+}]_{1}$ allows analytic extensions to the lower half $\lambda$-plane $\mathbb{C}^{+}$; $[J_{+}]_{2},[J_{+}]_{3}$ as well as $[J_{+}]_{4}$ allow analytic extensions to the upper half $\lambda$-plane $\mathbb{C}^{-}$.
\end{proof}

To prove the following conclusion, we first introduce a theorem.
\begin{thm}
(Abel's indentity) Suppose $A(x)\in\mathbb{C}^{n\times n}$,
\begin{align*}
Y_{x}=A(x)Y,
\end{align*}
then we obtain
\begin{align*}
(\det Y)_{x}=(trA)\det Y.
\end{align*}
Furthermore, we have
\begin{align*}
\det Y(x)=\det Y(x_{0})e^{\int_{x_{0}}^{x}trA(t)dt}.
\end{align*}
\end{thm}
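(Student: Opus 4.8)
The plan is to prove the differential identity $(\det Y)_{x}=(\mathrm{tr}\,A)\det Y$ first, and then integrate the resulting scalar linear ODE to obtain the closed form. For the differential identity I would use the multilinearity of the determinant as a function of the columns of $Y$. Writing $Y=([Y]_{1},\dots,[Y]_{n})$ in terms of its columns, the product rule for multilinear maps gives
\begin{align*}
(\det Y)_{x}=\sum_{j=1}^{n}\det\bigl([Y]_{1},\dots,[Y]_{j-1},[Y_{x}]_{j},[Y]_{j+1},\dots,[Y]_{n}\bigr).
\end{align*}
From $Y_{x}=A(x)Y$ one has $[Y_{x}]_{j}=\sum_{k=1}^{n}A_{kj}\,[Y]_{k}$ (or, with the other index convention, $[Y_{x}]_{j}=\sum_{k}A_{jk}[Y]_{k}$; either way the diagonal entry is what survives). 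Substituting this into the $j$-th summand and using multilinearity again, every term with $k\neq j$ contains two equal columns $[Y]_{k}$ and hence vanishes, leaving only the $k=j$ term, which equals $A_{jj}\det Y$.

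Summing over $j$ then yields $(\det Y)_{x}=\bigl(\sum_{j}A_{jj}\bigr)\det Y=(\mathrm{tr}\,A)\det Y$, which is the first assertion. For the second assertion, I would regard $w(x):=\det Y(x)$ as a scalar function solving the linear first-order ODE $w'(x)=(\mathrm{tr}\,A(x))\,w(x)$. Multiplying by the integrating factor $\exp\!\bigl(-\int_{x_{0}}^{x}\mathrm{tr}\,A(t)\,dt\bigr)$ shows that $\frac{d}{dx}\bigl[w(x)\exp(-\int_{x_{0}}^{x}\mathrm{tr}\,A(t)\,dt)\bigr]=0$, so this quantity is constant in $x$; evaluating at $x=x_{0}$ identifies the constant as $w(x_{0})=\det Y(x_{0})$, giving $\det Y(x)=\det Y(x_{0})\,e^{\int_{x_{0}}^{x}\mathrm{tr}\,A(t)\,dt}$.

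There is essentially no deep obstacle here; the only point requiring a little care is the combinatorial bookkeeping in the column-expansion step, namely being explicit that each off-diagonal contribution produces a determinant with a repeated column and therefore drops out. An alternative route, if one prefers, is to use the identity $(\det Y)_{x}=\det Y\cdot\mathrm{tr}(Y^{-1}Y_{x})=\det Y\cdot\mathrm{tr}(Y^{-1}AY)=\det Y\cdot\mathrm{tr}(A)$, valid wherever $Y$ is invertible, and then remove the invertibility hypothesis by continuity (or by invoking uniqueness of solutions of the matrix ODE, which forces $\det Y$ to be either identically zero or nowhere zero). I would still favour the multilinearity argument since it needs no invertibility assumption and keeps the proof self-contained.
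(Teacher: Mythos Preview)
Your overall strategy matches the paper's exactly: expand $(\det Y)_{x}$ by multilinearity, discard the off-diagonal contributions because they produce determinants with a repeated row/column, and then integrate the scalar ODE $w'=(\mathrm{tr}\,A)w$. The integration step is fine.

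There is, however, a genuine bookkeeping slip in the differential step. For the equation $Y_{x}=AY$ (left multiplication), the $j$-th column obeys $[Y_{x}]_{j}=A\,[Y]_{j}$, and this is \emph{not} in general a linear combination $\sum_{k}A_{kj}[Y]_{k}$ of the columns of $Y$; that formula would correspond to $Y_{x}=YA$. Your alternative $\sum_{k}A_{jk}[Y]_{k}$ is not correct either. So as written, the ``repeated column'' argument does not apply, because $A[Y]_{j}$ has not been expressed in the basis of columns of $Y$ with the entries of $A$ as coefficients. What \emph{does} decompose cleanly under left multiplication is the rows: writing $Y^{j}$ for the $j$-th row, $Y_{x}=AY$ gives $Y^{j}_{x}=\sum_{k}A_{jk}Y^{k}$, and then your repeated-row argument goes through verbatim to yield $(\det Y)_{x}=\bigl(\sum_{j}A_{jj}\bigr)\det Y$. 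This is precisely how the paper argues. Your second proposed route, $(\det Y)_{x}=\det Y\cdot\mathrm{tr}(Y^{-1}Y_{x})=\det Y\cdot\mathrm{tr}(Y^{-1}AY)=\det Y\cdot\mathrm{tr}\,A$, is correct as stated and avoids the issue entirely; if you want to keep the column formulation, that is the cleanest fix.
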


\begin{proof}
Introduce
\begin{align*}
A=\left(\begin{array}{cccc}
a_{11} & a_{12} & \cdots & a_{1n}\\
a_{21} & a_{22} & \cdots & a_{2n}\\
\vdots & \vdots & \ddots & \vdots\\
a_{n1} & a_{n2} & \cdots &a_{nn}
\end{array}
\right),
Y=\left(\begin{array}{c}
Y_{1}\\
Y_{2}\\
\vdots\\
Y_{n}\\
\end{array}
\right),
\end{align*}
where $Y_{j}$ is the $j$th row of matrix Y. From $Y_{x}=A(x)Y$, we have
\begin{align*}
Y_{j,x}=a_{j1}Y_{1}+a_{j2}Y_{2}+\cdots+a_{jn}Y_{n},j=1,2,\cdots,n.
\end{align*}
Then, we know that
\begin{align*}
\left(\det Y\right)_{x}=\sum_{j=1}^{n}\det\left(\begin{array}{c}
Y_{1}\\
\vdots\\
Y_{j,x}\\
\vdots\\
Y_{n}\\
\end{array}\right)
=\sum_{j=1}^{n}\det\left(\begin{array}{c}
Y_{1}\\
\vdots\\
a_{j1}Y_{1}+\cdots+a_{jj}Y_{j}+\cdots+a_{jn}Y_{n}\\
\vdots\\
Y_{n}\\
\end{array}\right),
\end{align*}
which implies
\begin{align*}
\left(\det Y\right)_{x}=\sum_{j=1}^{n}a_{jj}\det Y=(trA)\det Y.
\end{align*}
Next, we can get $\det Y=C_{0}e^{\int_{x_{0}}^{x}trA(t)dt}$ through integral. Then making $x=x_{0}$, we can obtain $C_{0}=\det Y(x_{0})$. Thus
\begin{align*}
\det Y(x)=\det Y(x_{0})e^{\int_{x_{0}}^{x}trA(t)dt}.
\end{align*}
Now we finish this proof.
\end{proof}

\begin{cor}
Since $tr(Q)=0$, we know that
\begin{align}\label{Q10}
\det J_{\pm}=1,\lambda\in\mathbb{R}.
\end{align}
\end{cor}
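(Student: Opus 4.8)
The plan is to deduce \eqref{Q10} from Abel's identity (the Theorem above) applied to the Jost solutions. The one subtlety is that $J_{\pm}$ themselves do not satisfy an equation of the form $Y_{x}=A(x)Y$: the commutator term $i\lambda c_{1}[\sigma,J]$ appearing in the first equation of \eqref{Q5} is not of the shape $(\text{matrix})\cdot J$. To get around this I would go back to the genuine eigenfunction. At fixed $t$ (which is all that matters here, since the $t$-part of \eqref{Q5} has been set aside), $\Phi$ and $J$ are related by $\Phi=J\,e^{i\lambda c_{1}x\sigma}$ up to an $x$-independent diagonal factor — this is the $x$-dependence encoded in \eqref{Q4}, and it is consistent with the first equation of \eqref{Q5} precisely because $U_{0}=ic_{1}\sigma$ and $U_{1}=Q$, so that $\Phi_{x}=U\Phi$ with $U=i\lambda c_{1}\sigma+Q$. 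For $\lambda\in\mathbb{R}$ this is exactly the hypothesis of Abel's identity, hence $\det\Phi(x)=\det\Phi(x_{0})\exp\bigl(\int_{x_{0}}^{x}tr\,U\,dy\bigr)$.

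Next I would compute the trace. Since $tr\,\sigma=-2+1+1+1=1$ and, crucially, $tr\,Q=tr\,U_{1}=0$, we get $tr\,U=i\lambda c_{1}\,tr\,\sigma+tr\,Q=i\lambda c_{1}$, a constant in $x$; hence $\det\Phi(x)=\det\Phi(x_{0})\,e^{i\lambda c_{1}(x-x_{0})}$. On the other hand, because the matrix relating $\Phi$ to $J$ is diagonal, $\det\Phi(x)=\det J(x)\cdot e^{i\lambda c_{1}x\,tr\,\sigma}=\det J(x)\,e^{i\lambda c_{1}x}$ up to an $x$-independent constant. Substituting this into the previous relation and cancelling the common factor $e^{i\lambda c_{1}x}$ yields $\det J_{\pm}(x)=\det J_{\pm}(x_{0})$ for all $x,x_{0}\in\mathbb{R}$ and all $\lambda\in\mathbb{R}$; that is, $\det J_{\pm}$ is independent of $x$.

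It then remains only to evaluate the constant through the normalizations \eqref{Q7}. Letting $x_{0}\to-\infty$ for $J_{-}$ gives $\det J_{-}\equiv\det\mathbb{I}=1$, and letting $x_{0}\to+\infty$ for $J_{+}$ gives $\det J_{+}\equiv1$, which is \eqref{Q10}. Alternatively, in the spirit of the phrasing ``since $tr(Q)=0$'', one can argue directly with $J$: on any interval where $J$ is invertible it solves $J_{x}=AJ$ with $A=i\lambda c_{1}(\sigma-J\sigma J^{-1})+Q$, and $tr\,A=i\lambda c_{1}(tr\,\sigma-tr\,\sigma)+tr\,Q=0$, so $\det J$ is locally constant by Abel's identity; since $J_{\pm}$ is close to $\mathbb{I}$ — hence invertible with unit determinant — near $x=\mp\infty$, this forces $\det J_{\pm}\equiv1$ on all of $\mathbb{R}$.

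The calculation is elementary, so I do not expect a genuine obstacle; the single point that needs care is precisely the mismatch noted above between $J_{x}=i\lambda c_{1}[\sigma,J]+QJ$ and the form $Y_{x}=A(x)Y$ required by Abel's identity. Resolving it — either by passing to $\Phi$ and tracking the diagonal exponential, or by observing that the effective coefficient matrix for $J$ still has trace $tr\,Q$ regardless of the commutator — is the only substantive step, with the remainder being routine bookkeeping involving $tr\,\sigma=1$ and the boundary conditions \eqref{Q7}.
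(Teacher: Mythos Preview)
Your argument is correct and follows the same underlying idea as the paper: use Abel's identity together with $\operatorname{tr} Q=0$ and the normalizations \eqref{Q7}. The paper in fact states the Corollary without any proof, simply recording it as an immediate consequence of Abel's identity and $\operatorname{tr} Q=0$.

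Where you go beyond the paper is in flagging and resolving the point that the first equation of \eqref{Q5} is not literally of the shape $Y_{x}=A(x)Y$ because of the commutator term. Your two remedies---passing to $\Phi$ (which does satisfy $\Phi_{x}=U\Phi$) and tracking the diagonal exponential, or invoking Jacobi's formula via $\operatorname{tr}(J_{x}J^{-1})=i\lambda c_{1}\bigl(\operatorname{tr}\sigma-\operatorname{tr}(J\sigma J^{-1})\bigr)+\operatorname{tr}Q=0$---are both sound, and the second makes transparent why the commutator contributes nothing regardless of $\operatorname{tr}\sigma$. The paper silently presumes this cancellation; your write-up makes it explicit, which is an improvement rather than a deviation.
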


Introducing the notation $E=e^{ic_{1}\lambda\sigma x}$, one can know that $J_{-}E$ and $J_{+}E$ are matrix solutions of the $x$-part of Eq.\eqref{Q5}, so they are linear dependent. There is a $4\times4$ scattering matrix $S(\lambda)=(s_{kj})_{4\times4}$ to make these two solutions satisfy
\begin{align}\label{Q11}
J_{-}E=J_{+}S(\lambda),\lambda\in\mathbb{R}.
\end{align}

Eq.\eqref{Q10} and Eq.\eqref{Q11} imply that
\begin{align}\label{Q12}
\det S(\lambda)=1,\lambda\in\mathbb{R}.
\end{align}

In order to establish the RH problem, we need to consider the inverse matrix of $J_{\pm}$ and block it by rows. Marking it as
\begin{align}\label{Q13}
J^{-1}_{\pm}=\left(\begin{array}{c}
\left[J^{-1}_{\pm}\right]^{1}\\
\left[J^{-1}_{\pm}\right]^{2}\\
\left[J^{-1}_{\pm}\right]^{3}\\
\left[J^{-1}_{\pm}\right]^{4}
\end{array}\right),
\end{align}
and it is not difficult to check that $J^{-1}_{\pm}$ is determined by the equation of $K$
\begin{align}\label{Q14}
K_{x}=c_{1}i\lambda[\sigma,K]-KQ.
\end{align}
From Eq.\eqref{Q14}, one can know that $\left[J^{-1}_{-}\right]^{1},\left[J^{-1}_{+}\right]^{2}, \left[J^{-1}_{+}\right]^{3},\left[J^{-1}_{+}\right]^{4}$ are analytically extendible to the lower half $\lambda$-plane $\mathbb{C}^{-}$, when $\left[J^{-1}_{+}\right]^{1},\left[J^{-1}_{-}\right]^{2}, \left[J^{-1}_{-}\right]^{3},\left[J^{-1}_{-}\right]^{4}$ are analytically extendible to the upper half $\lambda$-plane $\mathbb{C}^{+}$.

Further,
\begin{align}\label{Q15}
E^{-1}J^{-1}_{-}=R(\lambda)E^{-1}J^{-1}_{+},\qquad\lambda\rightarrow\infty,
\end{align}
can be obtained from Eq.\eqref{Q11}. Here $R(\lambda)=(r_{ij})_{4\times4}=S^{-1}(\lambda)$, is called the scattering matrix.

\begin{prop}
The analytic properties of $s_{ij},r_{ij}$ are given as:\\
$s_{11}$  allows analytic extensions to the upper half $\lambda$-plane $\mathbb{C}^{+}$; $s_{22}$, $s_{23}$, $s_{24}$, $s_{32}$, $s_{33}$, $s_{34}$, $s_{42}$, $s_{43}$ and $s_{44}$ allow analytic extensions to the lower half $\lambda$-plane $\mathbb{C}^{+}$; $s_{12}$, $s_{13}$, $s_{14}$, $s_{21}$, $s_{31}$, $s_{41}$ cannot be extended off the real $\lambda$-axis. $r_{11}$  allows analytic extensions to the upper half $\lambda$-plane $\mathbb{C}^{+}$; $r_{22}$, $r_{23}$, $r_{24}$, $r_{32}$, $r_{33}$, $r_{34}$, $r_{42}$, $r_{43}$ and $r_{44}$ allow analytic extensions to the lower half $\lambda$-plane $\mathbb{C}^{+}$; $r_{12}$, $r_{13}$, $r_{14}$, $r_{21}$, $r_{31}$, $r_{41}$ cannot be extended off the real $\lambda$-axis.
\end{prop}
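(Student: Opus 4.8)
The plan is to express each entry of $S(\lambda)$ and of $R(\lambda)=S^{-1}(\lambda)$ as a scalar product of a row of one Jost matrix (or of its inverse) with a column of the other, up to an entire nonvanishing prefactor, and then to read off its region of analyticity from Proposition 2.1 and from the analyticity of the rows of $J_{\pm}^{-1}$ recorded in the paragraph after \eqref{Q14}.

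First I would rewrite the scattering relation \eqref{Q11} entrywise. Since $E=e^{ic_{1}\lambda\sigma x}$, with $\sigma=\mathrm{diag}(-2,1,1,1)$, is diagonal and invertible, \eqref{Q11} gives $S(\lambda)=E^{-1}J_{+}^{-1}J_{-}E$, and one checks in the usual way (both $J_{\pm}E$ solve the $x$-part of \eqref{Q5}) that this expression is independent of $x$. Writing it out,
\begin{align*}
s_{ij}=e^{-ic_{1}\lambda\sigma_{i}x}\,\big([J_{+}^{-1}]^{i}[J_{-}]_{j}\big)\,e^{ic_{1}\lambda\sigma_{j}x},
\end{align*}
so the $\lambda$-analyticity of $s_{ij}$ is precisely that of the product $[J_{+}^{-1}]^{i}[J_{-}]_{j}$ of the $i$-th row of $J_{+}^{-1}$ with the $j$-th column of $J_{-}$ (the exponential factors are entire and never vanish). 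Inverting \eqref{Q11}, equivalently using \eqref{Q15}, gives $R(\lambda)=E^{-1}J_{-}^{-1}J_{+}E$, so $r_{ij}$ inherits the analyticity of $[J_{-}^{-1}]^{i}[J_{+}]_{j}$.

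Second I would substitute the known regions. By Proposition 2.1, $[J_{-}]_{1}$ and $[J_{+}]_{2},[J_{+}]_{3},[J_{+}]_{4}$ continue to $\mathbb{C}^{+}$, while $[J_{+}]_{1}$ and $[J_{-}]_{2},[J_{-}]_{3},[J_{-}]_{4}$ continue to $\mathbb{C}^{-}$; from the discussion after \eqref{Q14}, $[J_{-}^{-1}]^{1}$ and $[J_{+}^{-1}]^{2},[J_{+}^{-1}]^{3},[J_{+}^{-1}]^{4}$ continue to $\mathbb{C}^{-}$, while $[J_{+}^{-1}]^{1}$ and $[J_{-}^{-1}]^{2},[J_{-}^{-1}]^{3},[J_{-}^{-1}]^{4}$ continue to $\mathbb{C}^{+}$. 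Hence $s_{11}=[J_{+}^{-1}]^{1}[J_{-}]_{1}$ is a product of two functions analytic in $\mathbb{C}^{+}$ and extends there; for $i,j\in\{2,3,4\}$ both $[J_{+}^{-1}]^{i}$ and $[J_{-}]_{j}$ are analytic in $\mathbb{C}^{-}$, so $s_{ij}$ extends to $\mathbb{C}^{-}$; and for $s_{1j}$ with $j\in\{2,3,4\}$ and $s_{i1}$ with $i\in\{2,3,4\}$ the two factors are analytic in opposite half-planes, so the product is defined only on $\mathbb{R}$ and these entries carry no analytic continuation. The entries of $R$ are treated identically with $J_{+}$ and $J_{-}$ (and $J_{+}^{-1}$, $J_{-}^{-1}$) interchanged: $r_{11}$ extends to $\mathbb{C}^{-}$, the block $r_{ij}$ with $i,j\in\{2,3,4\}$ extends to $\mathbb{C}^{+}$, and $r_{1j}$, $r_{i1}$ remain on $\mathbb{R}$.

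There is no genuine analytical difficulty here; the argument is just bookkeeping of regions already obtained. The only point worth stating with care is the word ``cannot'' for the mixed entries: the representation above merely shows that this particular formula provides no continuation, so I would append the standard remark that these reflection-type coefficients generically admit none. I would also double-check the half-plane labels in the statement for consistency, since the computation above forces, for instance, $s_{22},\dots,s_{44}$ (and $r_{11}$) into $\mathbb{C}^{-}$.
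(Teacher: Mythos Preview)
Your approach is essentially identical to the paper's: the paper also writes $S(\lambda)=E^{-1}J_{+}^{-1}J_{-}E$, displays the matrix with entries $[J_{+}^{-1}]^{i}[J_{-}]_{j}$, reads off the analyticity from the known regions of the rows of $J_{+}^{-1}$ and columns of $J_{-}$, and then says ``similarly'' for $R(\lambda)$. Your final remark is well taken: the statement in the paper contains internal typos (e.g.\ ``lower half $\lambda$-plane $\mathbb{C}^{+}$''), and your computation that $r_{11}$ extends to $\mathbb{C}^{-}$ while $r_{ij}$ for $i,j\in\{2,3,4\}$ extend to $\mathbb{C}^{+}$ is confirmed by the paper's own later use in equations \eqref{Q21}.
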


\begin{proof}
Resorting to Eq.\eqref{Q11} and Eq.\eqref{Q15}, one knows that
$$E^{-1}J^{-1}_{+}J_{-}E=S(\lambda),$$
hence
$$
S(\lambda)=E^{-1}\left(\begin{array}{cccc}
\left[J^{-1}_{+}\right]^{1}\left[J_{-}\right]_{1} & \left[J^{-1}_{+}\right]^{1}\left[J_{-}\right]_{2} &
 \left[J^{-1}_{+}\right]^{1}\left[J_{-}\right]_{3} & \left[J^{-1}_{+}\right]^{1}\left[J_{-}\right]_{4}\\
\left[J^{-1}_{+}\right]^{2}\left[J_{-}\right]_{1} & \left[J^{-1}_{+}\right]^{2}\left[J_{-}\right]_{2} &
 \left[J^{-1}_{+}\right]^{2}\left[J_{-}\right]_{3} & \left[J^{-1}_{+}\right]^{2}\left[J_{-}\right]_{4}\\
\left[J^{-1}_{+}\right]^{3}\left[J_{-}\right]_{1} & \left[J^{-1}_{+}\right]^{3}\left[J_{-}\right]_{2} &
 \left[J^{-1}_{+}\right]^{3}\left[J_{-}\right]_{3} & \left[J^{-1}_{+}\right]^{3}\left[J_{-}\right]_{4}\\
\left[J^{-1}_{+}\right]^{4}\left[J_{-}\right]_{1} & \left[J^{-1}_{+}\right]^{4}\left[J_{-}\right]_{2} &
 \left[J^{-1}_{+}\right]^{4}\left[J_{-}\right]_{3} & \left[J^{-1}_{+}\right]^{4}\left[J_{-}\right]_{4}
\end{array}\right)E
.$$
According to the analytic properties of $[J^{-1}_{+}]$ and $[J_{-}]$, we can proof the proposition. Similarly, the analytic properties of $r_{ij}$ can be obtained.
\end{proof}

\section{Riemann-Hilbert problem}

To construct the RH problem, we need to seek the two analytical functions in different regions $\mathbb{C}^{\pm}$. Take
\begin{align}\label{Q16}
P_{1}=\left([J_{-}]_{1},[J_{+}]_{2},[J_{+}]_{3},[J_{+}]_{4}\right),
\end{align}
which is analytical in $\mathbb{C}^{+}$. In addition, the following asymptotic behavior of $P_{1}$ can be determined by
\begin{align}\label{Q17}
P_{1}\rightarrow\mathbb{I},\qquad\lambda\rightarrow\infty.
\end{align}
Similarly, constructing a matrix function $P_{2}=P_{2}(x,\lambda)$ is analytic for $\lambda$ in $\mathbb{C}^{-}$
\begin{align}\label{Q18}
P_{2}=\left(\begin{array}{c}
\left[J^{-1}_{-}\right]^{1}\\
\left[J^{-1}_{+}\right]^{2}\\
\left[J^{-1}_{+}\right]^{3}\\
\left[J^{-1}_{+}\right]^{4}
\end{array}\right).
\end{align}
Furthermore, the following asymptotic behavior of $P_{2}$ can be determined by
\begin{align}\label{Q19}
P_{2}\rightarrow\mathbb{I},\qquad\lambda\rightarrow\infty.
\end{align}

Based on the about results, the RH problem of tcCH equations can be formed.
\begin{thm}
Denoting the limit of $P_{1}$ on the left side of the real $\lambda$-axis is $P^{+}$, and the limit of $P_{2}$ on the right side of the real $\lambda$-axis is $P^{-}$. We can obtain
\begin{align}
&P^{\pm} \text{ is analytic in } \mathbb{C}^{\pm};\\
&P^{-}(x,\lambda)P^{+}(x,\lambda)=G(x,\lambda), \qquad \lambda\in\mathbb{R};\label{Q20}\\
&P^{\pm}(x,\lambda)\rightarrow \mathbb{I}, \quad\quad \lambda \rightarrow \infty,
\end{align}
where
\begin{align*}
G\left(x,\lambda\right)=\left(\begin{array}{cccc}
1 & r_{12}e^{-3ic_{1}\lambda x} & r_{13}e^{-3ic_{1}\lambda x} & r_{14}e^{-3ic_{1}\lambda x}\\
s_{21}e^{3ic_{1}\lambda x} & 1 & 0 & 0\\
s_{31}e^{3ic_{1}\lambda x} & 0 & 1 & 0\\
s_{41}e^{3ic_{1}\lambda x} & 0 & 0 & 1
\end{array}\right).
\end{align*}
\end{thm}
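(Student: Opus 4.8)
The plan is to assemble the Riemann-Hilbert problem from the analyticity facts already established and from the scattering relation \eqref{Q11}. First I would observe that $P_1=\left([J_{-}]_{1},[J_{+}]_{2},[J_{+}]_{3},[J_{+}]_{4}\right)$ is analytic in $\mathbb{C}^{+}$ by Proposition~2.3, and $P_2$ with rows $\left[J^{-1}_{-}\right]^{1},\left[J^{-1}_{+}\right]^{2},\left[J^{-1}_{+}\right]^{3},\left[J^{-1}_{+}\right]^{4}$ is analytic in $\mathbb{C}^{-}$ by the analyticity of $J^{-1}_\pm$ discussed after \eqref{Q14}. Thus $P^{+}$ and $P^{-}$, defined as the respective boundary values on the real axis, are analytic in $\mathbb{C}^{\pm}$; this gives the first assertion directly. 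The asymptotic normalization $P^{\pm}\to\mathbb{I}$ as $\lambda\to\infty$ is \eqref{Q17} and \eqref{Q19}, which follow from $J_\pm\to\mathbb{I}$ and $J^{-1}_\pm\to\mathbb{I}$ in the large-$\lambda$ limit via the Volterra representations \eqref{Q8}--\eqref{Q9}.

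The substantive step is deriving the jump relation \eqref{Q20} on $\lambda\in\mathbb{R}$. I would start from \eqref{Q11}, $J_{-}E=J_{+}S(\lambda)$, equivalently $E^{-1}J^{-1}_{+}J_{-}E=S(\lambda)$, and from the inverse relation $E^{-1}J^{-1}_{-}J_{+}E=R(\lambda)=S^{-1}(\lambda)$. The idea is to compute the product $P^{-}P^{+}$ column-by-column: the $(i,j)$ entry is $\left[P_2\right]^{i}\left[P_1\right]_{j}$, a row of one matrix times a column of the other. For entries where both factors come from the same $J_{+}$ block (i.e.\ $i,j\in\{2,3,4\}$) one gets $\left[J^{-1}_{+}\right]^{i}\left[J_{+}\right]_{j}=\delta_{ij}$, which produces the identity $3\times3$ block in the lower-right corner of $G$. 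For the first column ($j=1$, using $[J_{-}]_1$) and the first row ($i=1$, using $\left[J^{-1}_{-}\right]^{1}$), I would substitute $[J_{-}]_1$ and $\left[J^{-1}_{-}\right]^{1}$ in terms of $J_{+}$ using the scattering relation and its inverse, so that the surviving entries become exactly the scattering coefficients $s_{k1}$ and $r_{1k}$, each carried by the appropriate exponential factor coming from $E^{-1}(\cdot)E$ with $E=e^{ic_1\lambda\sigma x}$ and $\sigma=\mathrm{diag}(-2,1,1,1)$; the differences $-2-1=-3$ and $1-(-2)=3$ give the $e^{\mp 3ic_1\lambda x}$ factors displayed in $G$. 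The $(1,1)$ entry requires a small separate computation showing it equals $1$ (it combines $s_{11}$, $r_{11}$ and the column/row pieces and collapses via $\det$-type or $RS=\mathbb{I}$ identities).

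The main obstacle I anticipate is precisely this bookkeeping of which column of $J_{-}$ or row of $J^{-1}_{-}$ must be re-expressed through $J_{+}$ and the scattering data, and verifying that the non-analytic coefficients ($s_{12},s_{13},s_{14},s_{21},s_{31},s_{41}$ and their $r$-counterparts, which by Proposition~2.5 live only on the real axis) are exactly the ones that appear in $G$, while all coefficients requiring analytic continuation into $\mathbb{C}^{\pm}$ cancel out of the jump. Concretely, one must check that $P^{+}$ really equals the restriction to $\mathbb{R}$ of the $\mathbb{C}^{+}$-analytic $P_1$ and similarly for $P^{-}$, so that \eqref{Q20} is a genuine RH jump and not merely an algebraic identity. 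Once the column-times-row expansion is organized with the $E^{-1}(\,\cdot\,)E$ conjugation tracked carefully, the matrix $G(x,\lambda)$ falls out entry by entry, completing the proof.
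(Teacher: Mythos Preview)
Your proposal is correct and is exactly the computation the paper has in mind; in fact the paper states this theorem without an explicit proof, treating it as a direct consequence of the analyticity of the columns/rows of $J_\pm$ and $J_\pm^{-1}$ together with the scattering relation \eqref{Q11}--\eqref{Q15}, so your write-up supplies precisely the details the paper omits.

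One small simplification: the $(1,1)$ entry of $P^{-}P^{+}$ does not require any use of $s_{11}$, $r_{11}$, or $RS=\mathbb{I}$. It is simply $\left[J^{-1}_{-}\right]^{1}\left[J_{-}\right]_{1}=(J^{-1}_{-}J_{-})_{11}=1$, since both the first row of $P_2$ and the first column of $P_1$ come from $J_{-}$. The only entries that actually invoke the scattering data are the off-diagonal ones in the first row and first column, exactly as you describe.
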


To sum up the above analysis, the following conclusions are drawn.
\begin{align}\label{Q21}
\det P_{1}(\lambda)=s_{11}(\lambda),\qquad\lambda\in\mathbb{C}^{+},\\
\det P_{2}(\lambda)=r_{11}(\lambda),\qquad\lambda\in\mathbb{C}^{-}.
\end{align}

Noticing $Q^{\dagger}=-Q$, here $\dagger$ means the Hermitian of a matrix, the following conclusions can be drawn.
\begin{prop}
\begin{align}\label{Q22}
J^{\dagger}_{\pm}(\lambda^{*})=J^{-1}_{\pm}(\lambda).
\end{align}
\end{prop}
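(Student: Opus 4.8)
The plan is to exploit the single structural fact behind \eqref{Q22}: since $\sigma$ is real diagonal, $c_1\in\mathbb{R}$, and $Q^\dagger=-Q$, the coefficient matrix of the $x$‑part of \eqref{Q5} written as $J_x=ic_1\lambda[\sigma,J]+QJ$ has the symmetry $U(x,\lambda^*)^\dagger=-U(x,\lambda)$ with $U=ic_1\lambda\sigma+Q$. There are two natural routes to convert this into the stated identity: (i) show that the quadratic form $J_\pm(x,\lambda^*)^\dagger J_\pm(x,\lambda)$ is $x$‑independent and read off its value from the boundary conditions \eqref{Q7}; or (ii) show that $J_\pm(x,\lambda^*)^\dagger$ and $J_\pm^{-1}(x,\lambda)$ solve one and the same Volterra integral equation and invoke uniqueness. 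I would present route (ii), since it dispenses with any discussion of the (unbounded) exponential $E=e^{ic_1\lambda\sigma x}$ at non‑real $\lambda$.

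Concretely, I would start from the integral equations \eqref{Q8}--\eqref{Q9} for $J_\pm$, take the Hermitian conjugate of both sides, and then replace $\lambda$ by $\lambda^*$. Using $\big(e^{\pm ic_1\lambda^*\sigma(x-y)}\big)^\dagger=e^{\mp ic_1\lambda\sigma(x-y)}$ together with $Q^\dagger=-Q$, the kernel $e^{ic_1\lambda^*\sigma(x-y)}Q(y)(\cdot)\,e^{-ic_1\lambda^*\sigma(x-y)}$ turns, after conjugation, into $-e^{ic_1\lambda\sigma(x-y)}(\cdot)\,Q(y)e^{-ic_1\lambda\sigma(x-y)}$, and the stray minus sign is exactly absorbed by the one coming from $Q^\dagger=-Q$, so that
\begin{align*}
J_-(x,\lambda^*)^\dagger=\mathbb{I}-\int_{-\infty}^{x}e^{ic_1\lambda\sigma(x-y)}J_-(y,\lambda^*)^\dagger Q(y)\,e^{-ic_1\lambda\sigma(x-y)}\,dy,
\end{align*}
and likewise for $J_+$ with $\int_{x}^{+\infty}$ and the opposite overall sign. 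On the other hand, integrating \eqref{Q14} against the boundary data $J_\pm^{-1}\to\mathbb{I}$ as $x\to\pm\infty$ produces exactly the same Volterra equations for $J_\pm^{-1}(x,\lambda)$. Hence $J_\pm(x,\lambda^*)^\dagger$ and $J_\pm^{-1}(x,\lambda)$ are two solutions of identical Volterra equations, and the standard Neumann‑series uniqueness argument forces them to coincide, which is \eqref{Q22}.

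Finally, I would remark that this gives \eqref{Q22} on the real axis (and in the strip where the Neumann series converges), after which it extends to the full half‑plane domains of analyticity established above for the columns of $J_\pm$ and the rows of $J_\pm^{-1}$: by Schwarz reflection, $\lambda\mapsto J_\pm(\lambda^*)^\dagger$ is analytic in precisely those half‑planes, matching the analyticity of $J_\pm^{-1}(\lambda)$ column‑ and row‑wise, so agreement on $\mathbb{R}$ propagates by uniqueness of analytic continuation.

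The computations are routine; the one place that needs care is the bookkeeping of the three factors (the two exponentials and $Q$) under $\dagger$ combined with $\lambda\to\lambda^*$, and noticing that the minus sign from $Q^\dagger=-Q$ makes the two Volterra equations literally identical rather than merely similar. The only conceptual point worth flagging is why route (ii) (or, equivalently, continuation from $\mathbb{R}$) is preferable to the ``$x$‑independent quadratic form'' argument of route (i): at non‑real $\lambda$ the factor $e^{ic_1\lambda\sigma x}$ blows up as $x\to\pm\infty$, so the conserved quantity in (i) cannot be evaluated at the boundary without an additional argument, whereas the Volterra/analytic‑continuation route avoids this difficulty altogether.
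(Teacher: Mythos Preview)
Your proposal is correct and is essentially the integrated form of the paper's own proof. The paper argues at the level of the differential equation: it takes the Hermitian conjugate of the $x$-part of \eqref{Q5} at $\lambda^*$ to see that $J_\pm^\dagger(\lambda^*)$ solves \eqref{Q14}, then observes that $J_\pm^{-1}(\lambda)$ solves the same equation \eqref{Q14}, and concludes equality (leaving the matching of the boundary data $J_\pm^\dagger(\lambda^*)\to\mathbb{I}$, $J_\pm^{-1}(\lambda)\to\mathbb{I}$ implicit). Your route (ii) simply packages the same ODE-plus-boundary-condition comparison into the Volterra equations \eqref{Q8}--\eqref{Q9} and their adjoint analogues; the only substantive addition is your remark on propagating \eqref{Q22} from $\lambda\in\mathbb{R}$ to the half-planes by analytic continuation, a point the paper does not address.
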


\begin{proof}
According to the first equation of Eq.\eqref{Q5}, we can know that
\begin{align*}
J_{\pm,x}(\lambda^{*})=ic_{1}\lambda^{*}[\sigma,J_{\pm}(\lambda^{*})]+QJ_{\pm}(\lambda^{*}).
\end{align*}
Taking conjugate transposition on both sides of the equation, we can get
\begin{align*}
J^{\dagger}_{\pm,x}(\lambda^{*})
=ic_{1}\lambda\left[\sigma,J^{\dagger}_{\pm}(\lambda^{*})\right]-J^{\dagger}_{\pm}(\lambda^{*})Q.
\end{align*}
The expression $\left(J_{\pm}J^{-1}_{\pm}\right)_{x}=I_{x}=0$ implies that
\begin{align*}
J^{-1}_{\pm,x}
=ic_{1}\lambda\left[\sigma,J^{-1}_{\pm}\right]-J^{-1}_{\pm}Q.
\end{align*}
Owing to $J^{\dagger}_{\pm}\left(\lambda^{*}\right)$ and $J^{-1}_{\pm}\left(\lambda\right)$ also satisfy Eq.\eqref{Q14}, the above proposition is proved.
\end{proof}

\begin{prop}
\begin{align}\label{Q23}
S^{\dagger}(\lambda^{*})=S^{-1}(\lambda).
\end{align}
\end{prop}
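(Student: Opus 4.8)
The plan is to derive the symmetry relation \eqref{Q23} directly from the corresponding symmetry of the Jost functions \eqref{Q22} together with the scattering relation \eqref{Q11}. Since the argument is purely algebraic once \eqref{Q22} is in hand, I expect no real obstacle; the only care needed is in tracking how the diagonal matrix $E=e^{ic_{1}\lambda\sigma x}$ behaves under the Hermitian conjugate.

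First I would record that, because $\sigma=\mathrm{diag}(-2,1,1,1)$ is real and diagonal, $E$ is unitary for $\lambda\in\mathbb{R}$, i.e. $E^{\dagger}(\lambda)=E^{-1}(\lambda)$, and moreover $E^{\dagger}(\lambda^{*})=E^{-1}(\lambda)$ for all $\lambda$ (since $e^{ic_{1}\lambda^{*}\sigma x}$ conjugate-transposed gives $e^{-ic_{1}\lambda\sigma x}$). Next I would take the Hermitian conjugate of the scattering relation \eqref{Q11}, $J_{-}E=J_{+}S(\lambda)$, evaluated at $\lambda$, to obtain
\begin{align*}
E^{\dagger}(\lambda)\,J^{\dagger}_{-}(\lambda)=S^{\dagger}(\lambda)\,J^{\dagger}_{+}(\lambda).
\end{align*}
Then I would replace $\lambda$ by $\lambda^{*}$ throughout and invoke Proposition \ref{} (equation \eqref{Q22}), $J^{\dagger}_{\pm}(\lambda^{*})=J^{-1}_{\pm}(\lambda)$, which turns the last display into
\begin{align*}
E^{-1}(\lambda)\,J^{-1}_{-}(\lambda)=S^{\dagger}(\lambda^{*})\,J^{-1}_{+}(\lambda),\qquad\lambda\in\mathbb{R}.
\end{align*}

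Finally I would compare this with the inverse of the scattering relation \eqref{Q11}: from $J_{-}E=J_{+}S(\lambda)$ one gets $E^{-1}J^{-1}_{-}=S^{-1}(\lambda)J^{-1}_{+}$, exactly equation \eqref{Q15} written for $\lambda\in\mathbb{R}$. Since $J^{-1}_{+}(\lambda)$ is invertible, the two expressions for $E^{-1}J^{-1}_{-}$ force $S^{\dagger}(\lambda^{*})=S^{-1}(\lambda)$ on the real axis, and then by analytic continuation wherever both sides are defined, which is \eqref{Q23}. The only step that needs a word of justification is the unitarity of $E$ and the bookkeeping of the argument $\lambda$ versus $\lambda^{*}$; everything else is immediate from the earlier propositions. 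If one also wants the analogue $R^{\dagger}(\lambda^{*})=R^{-1}(\lambda)$ for $R=S^{-1}$, it follows by taking inverses of \eqref{Q23}.
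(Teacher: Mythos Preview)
Your proof is correct and follows essentially the same route as the paper's: take the Hermitian conjugate of the scattering relation \eqref{Q11}, invoke the Jost symmetry \eqref{Q22}, and compare with the inverted form of \eqref{Q11} to conclude $S^{\dagger}(\lambda^{*})=S^{-1}(\lambda)$. The only cosmetic difference is the order of operations---the paper first rewrites \eqref{Q11} as $J_{-}=J_{+}ES(\lambda)E^{-1}$, evaluates at $\lambda^{*}$, and then conjugates, whereas you conjugate first and then substitute $\lambda\mapsto\lambda^{*}$---which is immaterial.
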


\begin{proof}
Resorting to Eq.\eqref{Q11}, one can see that
\begin{align*}
J_{-}=J_{+}ES(\lambda)E^{-1},
\end{align*}
which means
\begin{align*}
J_{-}(\lambda^{*})=J_{+}(\lambda^{*})e^{ic_{1}\lambda^{*}\sigma x}S(\lambda^{*})e^{-ic_{1}\lambda^{*}\sigma x}.
\end{align*}
Taking the conjugate transpose of both sides of the equation yields
\begin{align*}
J^{\dagger}_{-}(\lambda^{*})=e^{ic_{1}\lambda\sigma x}S^{\dagger}(\lambda^{*})e^{-ic_{1}\lambda\sigma x}J^{\dagger}_{+}(\lambda^{*}),
\end{align*}
thus it follows from Eq.\eqref{Q22} that
\begin{align*}
J^{-1}_{-}(\lambda)=e^{ic_{1}\lambda\sigma x}S^{\dagger}(\lambda^{*})e^{-ic_{1}\lambda\sigma x}J^{-1}_{+}(\lambda).
\end{align*}
Combining with
\begin{align*}
J^{-1}_{-}(\lambda)=e^{ic_{1}\lambda\sigma x}S^{-1}(\lambda)e^{-ic_{1}\lambda\sigma x}J^{-1}_{+}(\lambda),
\end{align*}
the proof of proposition is finished.
\end{proof}

Obviously, it is easy to obtain that
\addtocounter{equation}{1}
\begin{align}
&s_{11}(\lambda)=r^{*}_{11}(\lambda^{*}),\qquad\lambda\in\mathbb{C}^{+},\tag{\theequation a}\label{Q24}\\
&s^{*}_{21}(\lambda)=r_{12}(\lambda),\qquad\lambda\in\mathbb{R},\tag{\theequation b}\\
&s^{*}_{31}(\lambda)=r_{13}(\lambda),\qquad\lambda\in\mathbb{R},\tag{\theequation c}\\
&s^{*}_{41}(\lambda)=r_{14}(\lambda),\qquad\lambda\in\mathbb{R}.\tag{\theequation d}
\end{align}

Besides, we get
\begin{align}\label{Q25}
P^{\dag}_{1}(\lambda^{*})=P_{2}(\lambda), \qquad \lambda\in\mathbb{C}^{-}.
\end{align}

Therefore, in terms of Eq.\eqref{Q24} as well as Eq.\eqref{Q25}, we can know that if $\lambda$ is a zero of $\det P_{1}$, $\hat{\lambda}=\lambda^{*}$ is a zero of $\det P_{2}$. Assuming that $\det P_{1}$ has $N$ simple zero $\{\lambda_{j}\}^{N}_{1}$ in $\mathbb{C}^{+}$, thus $\det P_{2}$ also has $N$ simple zero $\{\lambda^{*}_{j}\}^{N}_{1}$, which are all in $\mathbb{C}^{-}$. Here, these zeros and the nonzero vectors, which is $w_{j}$ and $w^{*}_{j}$ respectively, constitute the full set of the scattering data, such that
\begin{align}
P_{1}(\lambda_{j})w_{j}=0,\label{Q26}\\
\hat{w}_{j}P_{2}(\hat{\lambda}_{j})=0.
\end{align}
Then one can obtain the following relation
\begin{align}\label{Q27}
w^{\dagger}_{j}=\hat{w}_{j},\qquad 1\leq j\leq N.
\end{align}

\begin{prop}
The time-spatial revolutions of $w_{j}$ and $\hat{w}_{j}$ are listed below:
\begin{align}
&w_{j}=e^{i(c_{1}\lambda x+c_{2}\lambda^{3}t+c_{3}\lambda^{2}t)\sigma}w_{j,0},\qquad 1\leq j\leq N,\label{Q28}\\
&\hat{w}_{j}=w^{\dagger}_{j,0}e^{-i(c_{1}\lambda^{*} x+c_{2}(\lambda^{*})^{3}t+c_{3}(\lambda^{*})^{2}t)\sigma},\qquad 1\leq j\leq N,\label{Q29}
\end{align}
where $w_{j,0}$ is a complex constant vector.
\end{prop}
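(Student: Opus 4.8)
The plan is to derive the evolution of the discrete scattering vectors $w_j$ and $\hat w_j$ from the temporal part of the Lax pair (the second equation of \eqref{Q5}), which was deliberately set aside during the direct scattering analysis. First I would note that since $\lambda_j$ is a zero of $\det P_1$ and $P_1(\lambda_j)$ has rank three, the kernel vector $w_j = w_j(x,t)$ in \eqref{Q26} is determined up to scalar normalization; the goal is to pin down its $x$- and $t$-dependence. Differentiating $P_1(\lambda_j)w_j = 0$ in $x$ and using that the columns of $P_1$ are built from Jost solutions $J_\pm$ satisfying $J_x = ic_1\lambda[\sigma,J] + QJ$, one finds that $P_1$ satisfies $(P_1)_x = ic_1\lambda[\sigma,P_1] + Q P_1 - ic_1\lambda\,\sigma P_1$ after absorbing the exponential factors $E$; equivalently $\Phi_1 := P_1 E$ solves the genuine $x$-part $(\Phi_1)_x = (ic_1\lambda\sigma + Q)\Phi_1$. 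Hence $\Phi_1 E^{-1} w_j$ lies in the kernel, and comparing with the known asymptotics $P_1\to\mathbb I$ forces $w_j$ to be annihilated consistently only if $w_j$ itself evolves so that $E^{-1}w_j$ is $x$-independent up to the kernel ambiguity; carrying this through gives $(w_j)_x = ic_1\lambda_j \sigma w_j$, which integrates to the $x$-part of \eqref{Q28}.

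Next I would repeat the argument in the time variable. Since $J_\pm$ also solve the temporal equation $J_t = i(c_2\lambda^3 + c_3\lambda^2)[\sigma,J] + \tilde V J$, and $\lambda_j$ is $t$-independent (it is a zero of $\det P_1$, which by Abel's identity equals $s_{11}$, and the scattering coefficient $s_{11}$ is time-independent in the reflectionless/analytic setting — this should be recorded as a preliminary observation), differentiating $P_1(\lambda_j)w_j=0$ in $t$ and using the temporal Lax equation in the same way yields $(w_j)_t = i(c_2\lambda_j^3 + c_3\lambda_j^2)\sigma w_j$. Combining the two first-order linear ODEs, which are compatible because $\sigma$ is constant and the coefficients are scalars, gives
\begin{align*}
w_j(x,t) = e^{i(c_1\lambda_j x + c_2\lambda_j^3 t + c_3\lambda_j^2 t)\sigma} w_{j,0},
\end{align*}
with $w_{j,0}$ a constant vector, which is \eqref{Q28}. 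The formula \eqref{Q29} for $\hat w_j$ then follows immediately from the symmetry relation \eqref{Q27}, namely $\hat w_j = w_j^\dagger$: taking the conjugate transpose of the displayed expression and using $\sigma^\dagger = \sigma$ together with $\overline{i(c_1\lambda_j x + c_2\lambda_j^3 t + c_3\lambda_j^2 t)} = -i(c_1\lambda_j^* x + c_2(\lambda_j^*)^3 t + c_3(\lambda_j^*)^2 t)$ gives exactly \eqref{Q29}.

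The main obstacle is the step that converts "$P_1(\lambda_j)w_j=0$ and $P_1$ satisfies a linear ODE" into "$w_j$ satisfies the explicit ODE $(w_j)_x = ic_1\lambda_j\sigma w_j$." One must argue carefully that the vector $w_j$ spanning the one-dimensional kernel can be chosen smoothly in $(x,t)$ and that differentiating the kernel relation and projecting onto the kernel leaves no extra terms — this uses the simplicity of the zero $\lambda_j$ (so $\ker P_1(\lambda_j)$ stays one-dimensional) and the precise form of the Lax operators, in particular that the "potential" part $Q$ (and $\tilde V$) annihilates the kernel vector when combined with the $[\sigma,\cdot]$ term. Concretely, from $(\Phi_1)_x = (ic_1\lambda\sigma + Q)\Phi_1$ with $\Phi_1 = P_1 E$ one writes $0 = \partial_x(\Phi_1 E^{-1} w_j) = \Phi_1 E^{-1}\big((w_j)_x - ic_1\lambda_j\sigma w_j\big)$ using $E_x = ic_1\lambda\sigma E$; since $\Phi_1 E^{-1}$ restricted to the relevant subspace is injective away from the kernel, the bracket must lie in the kernel, and a normalization choice for $w_j$ removes the residual scalar freedom, yielding the stated ODE. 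The $t$-equation is handled identically once the $t$-independence of $\lambda_j$ is in hand. Everything else is routine: the compatibility of the two ODEs is automatic, and the passage to $\hat w_j$ is pure algebra via \eqref{Q27}.
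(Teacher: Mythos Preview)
The paper states this proposition without proof; it appears immediately before Section~4 with no accompanying argument. Your approach is the standard one (essentially the argument in Yang's monograph, cited in the paper as \cite{yjk-2010}) and is correct in substance.

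One small cleanup: the equation you write for $P_1$, namely $(P_1)_x = ic_1\lambda[\sigma,P_1] + Q P_1 - ic_1\lambda\,\sigma P_1$, is not correct. Since every column of $P_1$ is a column of some $J_\pm$, and each $J_\pm$ satisfies the \emph{same} equation $J_x = ic_1\lambda[\sigma,J] + QJ$, the matrix $P_1$ itself obeys exactly
\[
(P_1)_x = ic_1\lambda[\sigma,P_1] + QP_1,
\]
with no extra term. From here you can bypass the $\Phi_1 = P_1 E$ detour entirely: differentiating $P_1(\lambda_j)w_j=0$ in $x$ and using $P_1(\lambda_j)w_j=0$ to kill the $ic_1\lambda_j\sigma P_1 w_j$ and $QP_1 w_j$ pieces gives
\[
P_1(\lambda_j)\big((w_j)_x - ic_1\lambda_j\sigma w_j\big)=0
\]
in one line. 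Your $\Phi_1$ route reaches the same conclusion, but note that in your displayed identity $0=\partial_x(\Phi_1 E^{-1}w_j)=\Phi_1 E^{-1}\big((w_j)_x-ic_1\lambda_j\sigma w_j\big)$ you are silently using $(\Phi_1)_x E^{-1}w_j = (ic_1\lambda_j\sigma+Q)\,P_1 w_j = 0$; it would help to say so. The remainder of your argument --- simplicity of $\lambda_j$ gives a one-dimensional kernel, the scalar freedom is absorbed into the normalization of $w_j$, the $t$-equation is handled identically after noting $s_{11}$ is $t$-independent, and \eqref{Q29} follows from \eqref{Q27} by conjugate transposition --- is exactly right.
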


\section{Multi-soliton Solutions}

For the RH problem Eq.\eqref{Q20}, the solutions (for details, please refer to \cite{yjk-2010}) are given as follows:
\begin{align}
P_{1}(\lambda)=\mathbb{I}-\sum^{N}_{k=1}\sum^{N}_{j=1}\frac{w_{k}\hat{w}_{j}\left(M^{-1}_{kj}\right)}{\lambda-\hat{\lambda}_{j}},\label{Q30}\\
P_{2}(\lambda)=\mathbb{I}+\sum^{N}_{k=1}\sum^{N}_{j=1}\frac{w_{k}\hat{w}_{j}\left(M^{-1}_{kj}\right)}{\lambda-\hat{\lambda}_{j}},\label{Q31}
\end{align}
where $M$ is a $N\times N$ matrix whose elements are $m_{kj}=\frac{\hat{w}_{k}w_{j}}{\lambda_{j}-\hat{\lambda}_{k}}$.

Taking Laurent series expansion for $P_{1}$ yields
\begin{align}
P_{1}\left(\lambda\right)=\mathbb{I}+\lambda^{-1}P_{1}^{(1)}
+\lambda^{-2}P_{1}^{(2)}+\cdots,~~~~\lambda\rightarrow\infty.
\end{align}
Substituting expansion into the first equation of Eq.\eqref{Q5}, we collect the term $\lambda^{0}$ and  obtain
\begin{align}
Q=-ic_{1}[\sigma,P^{(1)}_{1}],
\end{align}
which implies
\begin{align}\label{Q32}
\left\{ \begin{aligned}
q_{1}=-3ic_{1}\left(P^{(1)}_{1}\right)_{12},\\
q_{2}=-3ic_{1}\left(P^{(1)}_{1}\right)_{13},\\
q_{3}=-3ic_{1}\left(P^{(1)}_{1}\right)_{14}.
\end{aligned}\right.
\end{align}

From Eq.\eqref{Q30}, we can directly calculate
\begin{align}\label{Q33}
P^{(1)}_{1}=-\sum^{N}_{k=1}\sum^{N}_{j=1} w_{k}\hat{w}_{j}\left(M^{-1}\right)_{kj}.
\end{align}

Supposing the nonzero vector satisfies $w_{j,0}=\left(\alpha_{j},\beta_{j},\gamma_{j},1\right)^{T}$, and $\theta_{j}=i(c_{1}\lambda x+c_{2}\lambda^{3}t+c_{3}\lambda^{2}t)$ with $\lambda_{j}=\xi_{j}+i\eta_{j}\left(\xi_{j}\neq0,\eta_{j}>0,1\leq j\leq N\right)$, therefore we obtain
\begin{align*}
w_{j}=e^{\theta_{j}\sigma}w_{j,0}=\left(\begin{array}{cccc}
e^{-2\theta_{j}} & 0 & 0 & 0\\
0 & e^{\theta_{j}} & 0 & 0\\
0 & 0 & e^{\theta_{j}} & 0\\
0 & 0 & 0 & e^{\theta_{j}}
\end{array}\right)
\left(\begin{array}{c}
\alpha_{j}\\
\beta_{j}\\
\gamma_{j}\\
1
\end{array}\right)
=\left(\begin{array}{c}
\alpha_{j}e^{-2\theta_{j}}\\
\beta_{j}e^{\theta_{j}}\\
\gamma_{j}e^{\theta_{j}}\\
e^{\theta_{j}}
\end{array}\right),
\end{align*}

\begin{align*}
\hat{w}_{j}(\lambda^{*})=\left(\begin{array}{cccc}
\alpha^{*}_{j}e^{-2\theta^{*}_{j}}, \beta^{*}_{j}e^{\theta^{*}_{j}}, \gamma^{*}_{j}e^{\theta^{*}_{j}}, e^{\theta^{*}_{j}}
\end{array}\right).
\end{align*}
Then
\begin{align*}
w_{k}\hat{w}_{j}=\left(\begin{array}{cccc}
\alpha_{k}\alpha^{*}_{j}e^{-2\left(\theta_{k}+\theta^{*}_{j}\right)} & \alpha_{k}\beta^{*}_{j}e^{-2\theta_{k}+\theta^{*}_{j}}
& \alpha_{k}\gamma^{*}_{j}e^{-2\theta_{k}+\theta^{*}_{j}} & \alpha_{k}e^{-2\theta_{k}+\theta^{*}_{j}}\\
\beta_{k}\alpha^{*}_{j}e^{\theta_{k}-2\theta^{*}_{j}} & \beta_{k}\beta^{*}_{j}e^{\theta_{k}+\theta^{*}_{j}}
& \beta_{k}\gamma^{*}_{j}e^{\theta_{k}+\theta^{*}_{j}} & \beta_{k}e^{\theta_{k}+\theta^{*}_{j}}\\
\gamma_{k}\alpha^{*}_{j}e^{\theta_{k}-2\theta^{*}_{j}} & \gamma_{k}\beta^{*}_{j}e^{\theta_{k}+\theta^{*}_{j}}
& \gamma_{k}\gamma^{*}_{j}e^{\theta_{k}+\theta^{*}_{j}} & \gamma_{k}e^{\theta_{k}+\theta^{*}_{j}}\\
\alpha^{*}_{j}e^{\theta_{k}-2\theta^{*}_{j}} &  \beta^{*}_{j}e^{\theta_{k}+\theta^{*}_{j}}
& \gamma^{*}_{j}e^{\theta_{k}+\theta^{*}_{j}} & e^{\theta_{k}+\theta^{*}_{j}}
\end{array}\right),
\end{align*}

\begin{align*}
\hat{w}_{k}w_{j}=\alpha^{*}_{k}\alpha_{j}e^{-2(\theta^{*}_{k}+\theta_{j})}
+\beta^{*}_{k}\beta_{j}e^{\theta^{*}_{k}+\theta_{j}}
+\gamma^{*}_{k}\gamma_{j}e^{\theta^{*}_{k}+\theta_{j}}
+e^{\theta^{*}_{k}+\theta_{j}}.
\end{align*}

As a result, $N$-soliton solution to the tcCH equations \eqref{Q1} can be derived by
\begin{align}\label{Q34}
\left\{ \begin{aligned}
&q_{1}=3ic_{1}\sum^{N}_{k=1}\sum^{N}_{j=1}\alpha_{k}\beta^{*}_{j}e^{-2\theta_{k}+\theta^{*}_{j}}\left(M^{-1}\right)_{kj},\\
&q_{2}=3ic_{1}\sum^{N}_{k=1}\sum^{N}_{j=1}\alpha_{k}\gamma^{*}_{j}e^{-2\theta_{k}+\theta^{*}_{j}}\left(M^{-1}\right)_{kj},\\
&q_{3}=3ic_{1}\sum^{N}_{k=1}\sum^{N}_{j=1}\alpha_{k}e^{-2\theta_{k}+\theta^{*}_{j}}\left(M^{-1}\right)_{kj},
\end{aligned}\right.
\end{align}
where $M=\left(m_{kj}\right)_{N\times N}$ is defined by
\begin{align*}
m_{kj}
&=\frac{\alpha^{*}_{k}\alpha_{j}e^{-2(\theta^{*}_{k}+\theta_{j})}
+\beta^{*}_{k}\beta_{j}e^{\theta^{*}_{k}+\theta_{j}}
+\gamma^{*}_{k}\gamma_{j}e^{\theta^{*}_{k}+\theta_{j}}
+e^{\theta^{*}_{k}+\theta_{j}}}{\lambda_{j}-\hat{\lambda}_{k}}\\
&=\frac{\alpha^{*}_{k}\alpha_{j}e^{-2(\theta^{*}_{k}+\theta_{j})}+\left(\beta^{*}_{k}\beta_{j}+\gamma^{*}_{k}\gamma_{j}+1\right)e^{\theta^{*}_{k}+\theta_{j}}}{\lambda_{j}-\hat{\lambda}_{k}}.
\end{align*}

Taking $N=1$ in Eq.\eqref{Q34}, single-soliton solution is listed as follows
\begin{align}\label{Q35}
\left\{ \begin{aligned}
&q_{1}=3ic_{1}\alpha_{1}\beta^{*}_{1}e^{-2\theta_{1}+\theta^{*}_{1}}\left(M^{-1}\right)_{11},\\
&q_{2}=3ic_{1}\alpha_{1}\gamma^{*}_{1}e^{-2\theta_{1}+\theta^{*}_{1}}\left(M^{-1}\right)_{11},\\
&q_{3}=3ic_{1}\alpha_{1}e^{-2\theta_{1}+\theta^{*}_{1}}\left(M^{-1}\right)_{11},
\end{aligned}\right.
\end{align}
where
\begin{align*}
m_{11}=\frac{\alpha^{*}_{1}\alpha_{1}e^{-2(\theta^{*}_{1}+\theta_{1})}+\left(\beta^{*}_{1}\beta_{1}+\gamma^{*}_{1}\gamma_{1}+1\right)e^{\theta^{*}_{1}+\theta_{1}}}{\lambda_{1}-\hat{\lambda}_{1}}.
\end{align*}

In what follows, let us discuss the case when $N=1$. Figs. 1-3 show the single-soliton solutions by choosing the appropriate parameters. We find that different parameters have different effects on the solutions, such as the smaller $\epsilon$ yields the higher the peak value, and  the larger $\epsilon$ yields the lower the peak value, etc.

\noindent
{\rotatebox{0}{\includegraphics[width=3.5cm,height=3.5cm,angle=0]{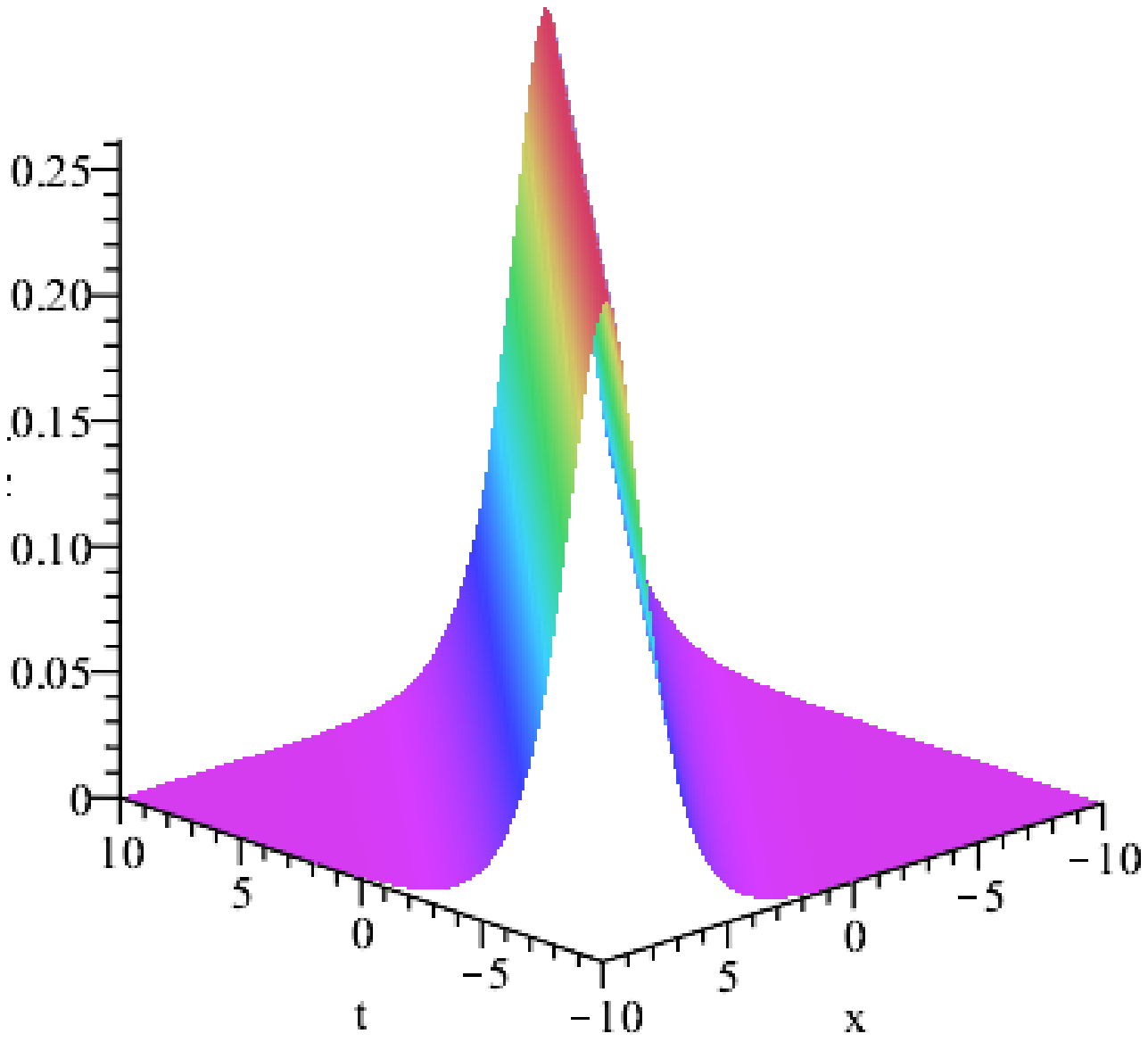}}}
~~~~
{\rotatebox{0}{\includegraphics[width=3.5cm,height=3.5cm,angle=0]{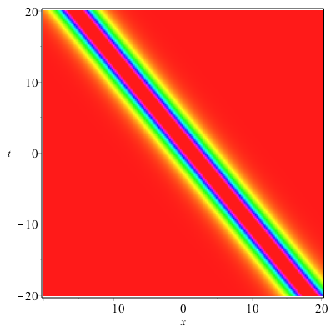}}}
\qquad\quad
{\rotatebox{0}{\includegraphics[width=3.5cm,height=3.5cm,angle=0]{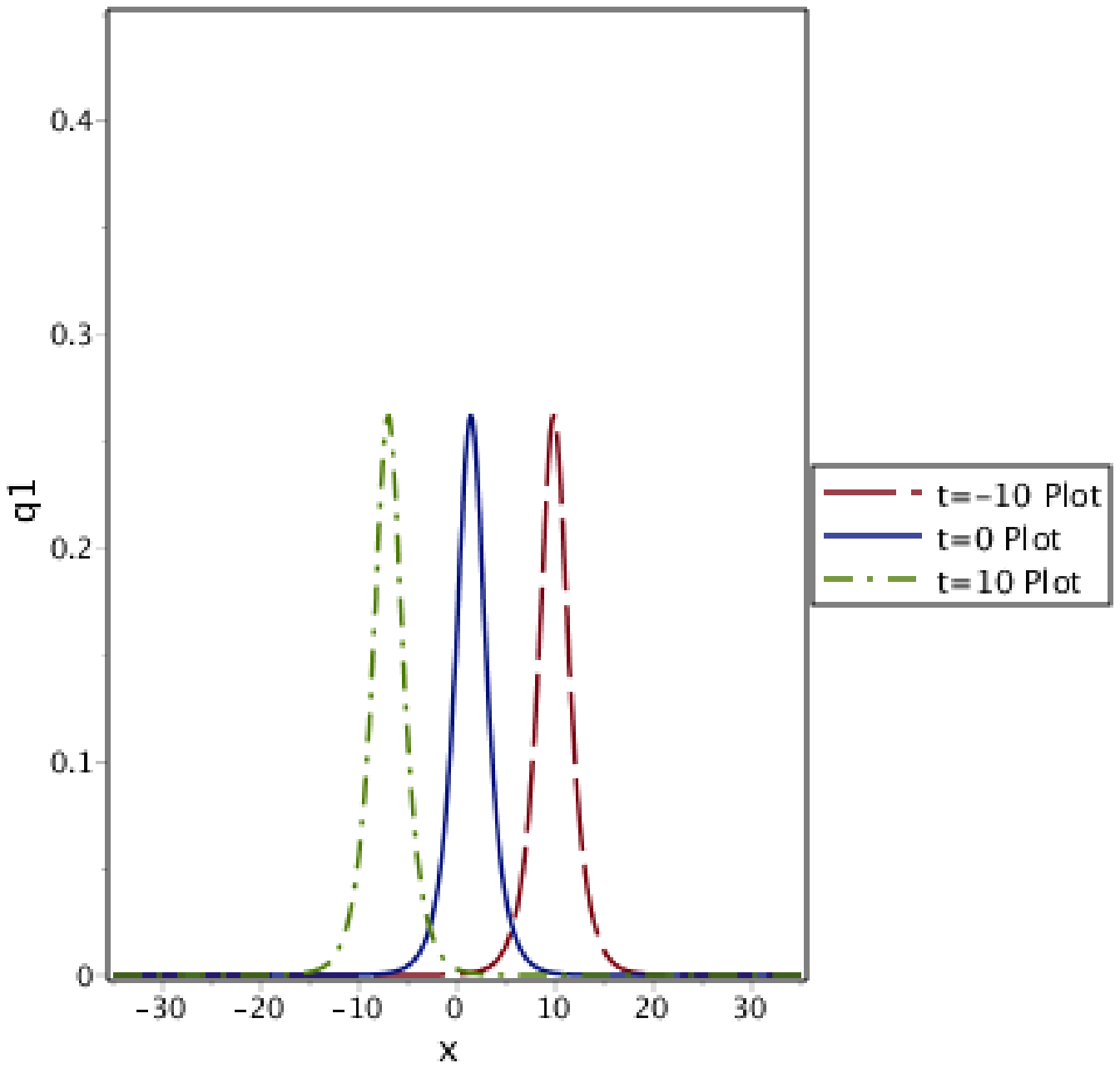}}}

\qquad\quad  $(a1)$
\qquad\qquad\qquad\qquad\qquad $(a2)$ \qquad\qquad\qquad\qquad\qquad$(a3)$\\
\noindent { \small \textbf{Figure 1.}
The single-soliton solutions for $|q_{1}|$ with the parameters selection
$\alpha_{1}=1+2i, \beta_{1}=2+i, \gamma_{1}=2-2i, \xi_{1}=\frac{1}{4}, \eta_{1}=\frac{1}{4}, \epsilon=\frac{1}{12}.$
$\textbf{(a1)}$ three dimensional plot at time $t = 0$  in the $(x, t)$ plane,
$\textbf{(a2)}$ density plot,
$\textbf{(a3)}$ The wave propagation along the $x$-axis with $t = -10$, $t = 0$, $t = 10$.}
\\

\noindent
{\rotatebox{0}{\includegraphics[width=3.5cm,height=3.5cm,angle=0]{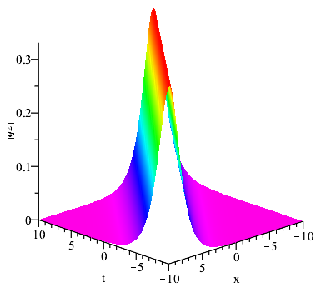}}}
~~~~
{\rotatebox{0}{\includegraphics[width=3.5cm,height=3.5cm,angle=0]{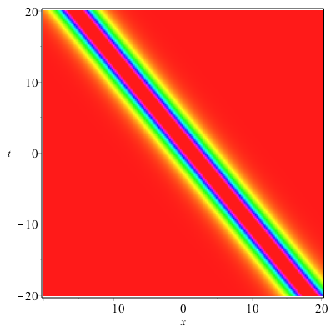}}}
\qquad\quad
{\rotatebox{0}{\includegraphics[width=3.5cm,height=3.5cm,angle=0]{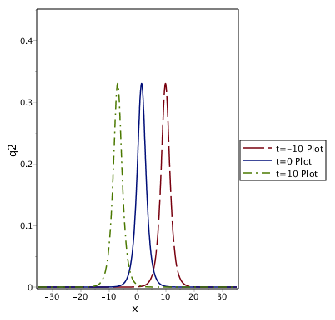}}}

\qquad\quad $(b1)$
\qquad\qquad\qquad\qquad\qquad $(b2)$ \qquad\qquad\qquad\qquad\qquad$(b3)$\\
\noindent { \small \textbf{Figure 2.}
The single-soliton solutions for $|q_{1}|$ with the parameters selection
$\alpha_{1}=1+2i, \beta_{1}=2+i, \gamma_{1}=2-2i, \xi_{1}=\frac{1}{4}, \eta_{1}=\frac{1}{4}, \epsilon=\frac{1}{12}.$
$\textbf{(b1)}$ three dimensional plot at time $t = 0$  in the $(x, t)$ plane,
$\textbf{(b2)}$ density plot,
$\textbf{(b3)}$ The wave propagation along the $x$-axis with $t = -10$, $t = 0$, $t = 10$.}
\\

\noindent
{\rotatebox{0}{\includegraphics[width=3.5cm,height=3.5cm,angle=0]{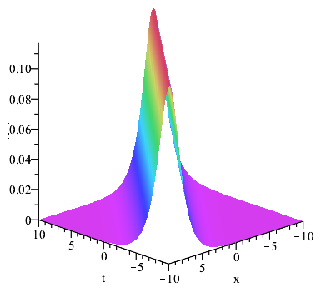}}}
~~~~
{\rotatebox{0}{\includegraphics[width=3.5cm,height=3.5cm,angle=0]{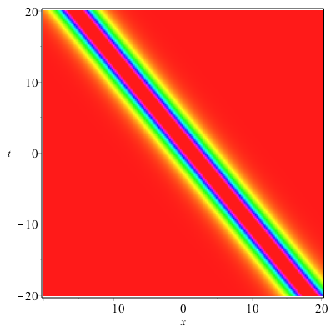}}}
\qquad\quad
{\rotatebox{0}{\includegraphics[width=3.5cm,height=3.5cm,angle=0]{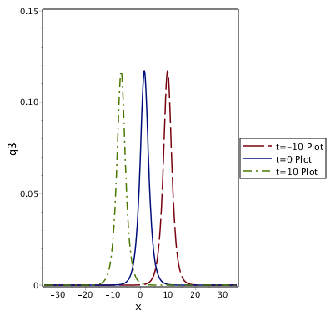}}}

\qquad\quad $(c1)$
\qquad\qquad\qquad\qquad\qquad $(c2)$ \qquad\qquad\qquad\qquad\qquad$(c3)$\\
\noindent { \small \textbf{Figure 3.}
The single-soliton solutions for $|q_{1}|$ with the parameters selection
$\alpha_{1}=1+2i, \beta_{1}=2+i, \gamma_{1}=2-2i, \xi_{1}=\frac{1}{4}, \eta_{1}=\frac{1}{4}, \epsilon=\frac{1}{12}.$
$\textbf{(c1)}$ three dimensional plot at time $t = 0$  in the $(x, t)$ plane,
$\textbf{(c2)}$ density plot,
$\textbf{(c3)}$ The wave propagation along the $x$-axis with $t = -10$, $t = 0$, $t = 10$.}

When taking $N=2$, the two-soliton solution can be expressed by
\begin{align}\label{Q36}
&\left\{ \begin{aligned}
q_{1}=&3ic_{1}\alpha_{1}\beta^{*}_{1}e^{-2\theta_{1}+\theta^{*}_{1}}\left(M^{-1}\right)_{11}
+3ic_{1}\alpha_{1}\beta^{*}_{2}e^{-2\theta_{1}+\theta^{*}_{2}}\left(M^{-1}\right)_{12}\\
&+3ic_{1}\alpha_{2}\beta^{*}_{1}e^{-2\theta_{2}+\theta^{*}_{1}}\left(M^{-1}\right)_{21}
+3ic_{1}\alpha_{2}\beta^{*}_{2}e^{-2\theta_{2}+\theta^{*}_{2}}\left(M^{-1}\right)_{22},\\
q_{2}=&3ic_{1}\alpha_{1}\gamma^{*}_{1}e^{-2\theta_{1}+\theta^{*}_{1}}\left(M^{-1}\right)_{11}
+3ic_{1}\alpha_{1}\gamma^{*}_{2}e^{-2\theta_{1}+\theta^{*}_{2}}\left(M^{-1}\right)_{12}\\
&+3ic_{1}\alpha_{2}\gamma^{*}_{1}e^{-2\theta_{2}+\theta^{*}_{1}}\left(M^{-1}\right)_{21}
+3ic_{1}\alpha_{2}\gamma^{*}_{2}e^{-2\theta_{2}+\theta^{*}_{2}}\left(M^{-1}\right)_{22},\\
q_{3}=&3ic_{1}\alpha_{1}e^{-2\theta_{1}+\theta^{*}_{1}}\left(M^{-1}\right)_{11}
+3ic_{1}\alpha_{1}e^{-2\theta_{1}+\theta^{*}_{2}}\left(M^{-1}\right)_{12}\\
&+3ic_{1}\alpha_{2}e^{-2\theta_{2}+\theta^{*}_{1}}\left(M^{-1}\right)_{21}
+3ic_{1}\alpha_{2}e^{-2\theta_{2}+\theta^{*}_{2}}\left(M^{-1}\right)_{22},
\end{aligned}\right.
\end{align}
where
\begin{align*}
\left\{ \begin{aligned}
m_{11}=\frac{\alpha^{*}_{1}\alpha_{1}e^{-2(\theta^{*}_{1}+\theta_{1})}+\left(\beta^{*}_{1}\beta_{1}+\gamma^{*}_{1}\gamma_{1}+1\right)e^{\theta^{*}_{1}+\theta_{1}}}{\lambda_{1}-\hat{\lambda}_{1}},\\
m_{12}=\frac{\alpha^{*}_{1}\alpha_{2}e^{-2(\theta^{*}_{1}+\theta_{2})}+\left(\beta^{*}_{1}\beta_{2}+\gamma^{*}_{1}\gamma_{2}+1\right)e^{\theta^{*}_{1}+\theta_{2}}}{\lambda_{2}-\hat{\lambda}_{1}},\\
m_{21}=\frac{\alpha^{*}_{2}\alpha_{1}e^{-2(\theta^{*}_{2}+\theta_{1})}+\left(\beta^{*}_{2}\beta_{1}+\gamma^{*}_{2}\gamma_{1}+1\right)e^{\theta^{*}_{2}+\theta_{1}}}{\lambda_{1}-\hat{\lambda}_{2}},\\
m_{22}=\frac{\alpha^{*}_{2}\alpha_{2}e^{-2(\theta^{*}_{2}+\theta_{2})}+\left(\beta^{*}_{2}\beta_{2}+\gamma^{*}_{2}\gamma_{2}+1\right)e^{\theta^{*}_{2}+\theta_{2}}}{\lambda_{2}-\hat{\lambda}_{2}}.
\end{aligned}\right.
\end{align*}

Next, we discuss the case for $N=2$. In Figs. 4-6, the two solitons pass through each other, and the shape changes after collision. In Fig. 4 and Fig. 5, the two soliton peaks change from equilibrium to  large difference, and the soliton peaks change from large difference to almost equal height and in Fig. 6. It is speculated that the soliton energy has a large amount of transfer during collision.\\

\noindent
{\rotatebox{0}{\includegraphics[width=3.5cm,height=3.5cm,angle=0]{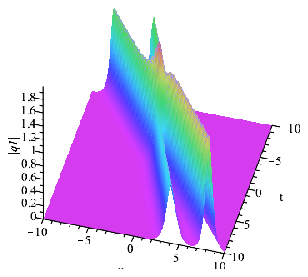}}}
~~~~
{\rotatebox{0}{\includegraphics[width=3.5cm,height=3.5cm,angle=0]{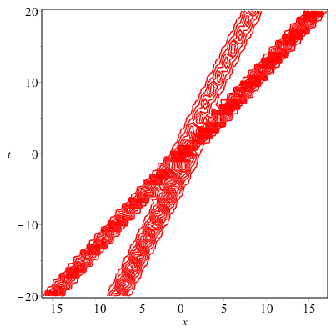}}}
\qquad\quad
{\rotatebox{0}{\includegraphics[width=3.5cm,height=3.5cm,angle=0]{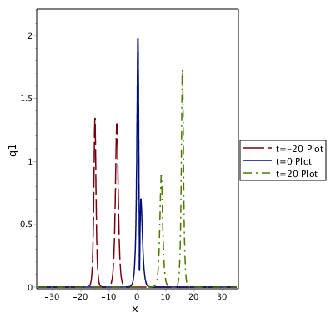}}}

\qquad\quad $(d1)$
\qquad\qquad\qquad\qquad\qquad $(d2)$ \qquad\qquad\qquad\qquad\qquad$(d3)$\\
\noindent { \small \textbf{Figure 4.}
The two-soliton solution for $|q_{1}|$ with the parameters selection
$\alpha_{1}=\frac{\sqrt{3}}{2}+\frac{1}{2}i, \beta_{1}=\frac{\sqrt{3}}{2}+\frac{1}{2}i, \gamma_{1}=\frac{\sqrt{3}}{2}+\frac{1}{2}i,
\alpha_{2}=\frac{\sqrt{3}}{2}-\frac{1}{2}i, \beta_{2}=\frac{\sqrt{3}}{2}-\frac{1}{2}i, \gamma_{2}=\frac{\sqrt{3}}{2}-\frac{1}{2}i,
\xi_{1}=0, \eta_{1}=\frac{3}{4},\xi_{2}=0, \eta_{1}=\frac{3}{4},
\epsilon=\frac{1}{12}.$
$\textbf{(d1)}$ three dimensional plot at time $t = 0$  in the $(x, t)$ plane,
$\textbf{(d2)}$ density plot,
$\textbf{(d3)}$ The wave propagation along the $x$-axis with $t = -20$, $t = 0$, $t = 20$.}\\

\noindent
{\rotatebox{0}{\includegraphics[width=3.5cm,height=3.5cm,angle=0]{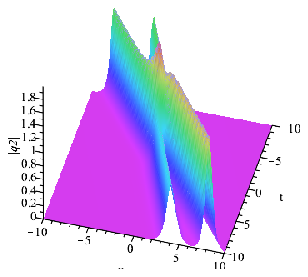}}}
~~~~
{\rotatebox{0}{\includegraphics[width=3.5cm,height=3.5cm,angle=0]{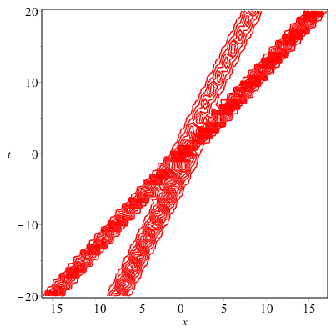}}}
\qquad\quad
{\rotatebox{0}{\includegraphics[width=3.5cm,height=3.5cm,angle=0]{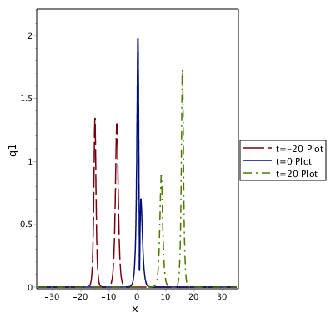}}}

\qquad\quad $(e1)$
\qquad\qquad\qquad\qquad\qquad $(e2)$ \qquad\qquad\qquad\qquad\qquad$(e3)$\\
\noindent { \small \textbf{Figure 5.}
The two-soliton solutions for $|q_{2}|$ with the parameters selection
$\alpha_{1}=\frac{\sqrt{3}}{2}+\frac{1}{2}i, \beta_{1}=\frac{\sqrt{3}}{2}+\frac{1}{2}i, \gamma_{1}=\frac{\sqrt{3}}{2}+\frac{1}{2}i,
\alpha_{2}=\frac{\sqrt{3}}{2}-\frac{1}{2}i, \beta_{2}=\frac{\sqrt{3}}{2}-\frac{1}{2}i, \gamma_{2}=\frac{\sqrt{3}}{2}-\frac{1}{2}i,
\xi_{1}=0, \eta_{1}=\frac{3}{4},\xi_{2}=0, \eta_{1}=\frac{3}{4}, \epsilon=\frac{1}{12}.$
$\textbf{(e1)}$ three dimensional plot at time $t = 0$  in the $(x, t)$ plane,
$\textbf{(e2)}$ density plot,
$\textbf{(e3)}$ The wave propagation along the $x$-axis with $t = -20$, $t = 0$, $t = 20$.}\\

\noindent
{\rotatebox{0}{\includegraphics[width=3.5cm,height=3.5cm,angle=0]{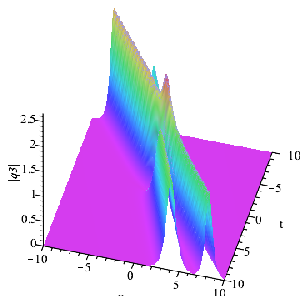}}}
~~~~
{\rotatebox{0}{\includegraphics[width=3.5cm,height=3.5cm,angle=0]{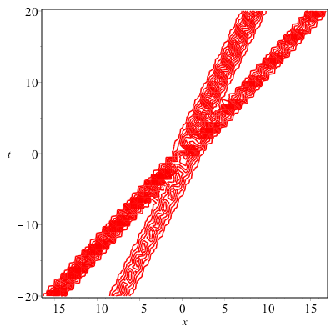}}}
\qquad\quad
{\rotatebox{0}{\includegraphics[width=3.5cm,height=3.5cm,angle=0]{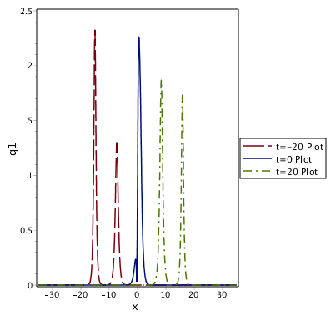}}}

\qquad\quad $(f1)$
\qquad\qquad\qquad\qquad\qquad $(f2)$ \qquad\qquad\qquad\qquad\qquad$(f3)$\\
\noindent { \small \textbf{Figure 6.}
The two-soliton solutions for $|q_{3}|$ with the parameters selection
$\alpha_{1}=\frac{\sqrt{3}}{2}+\frac{1}{2}i, \beta_{1}=\frac{\sqrt{3}}{2}+\frac{1}{2}i, \gamma_{1}=\frac{\sqrt{3}}{2}+\frac{1}{2}i,
\alpha_{2}=\frac{\sqrt{3}}{2}-\frac{1}{2}i, \beta_{2}=\frac{\sqrt{3}}{2}-\frac{1}{2}i, \gamma_{2}=\frac{\sqrt{3}}{2}-\frac{1}{2}i,
\xi_{1}=0, \eta_{1}=\frac{3}{4},\xi_{2}=0, \eta_{1}=\frac{3}{4}, \epsilon=\frac{1}{12}.$
$\textbf{(f1)}$ three dimensional plot at time $t = 0$  in the $(x, t)$ plane,
$\textbf{(f2)}$ density plot,
$\textbf{(f3)}$ The wave propagation along the $x$-axis with $t = -20$, $t = 0$, $t = 20$.}

\section{Conclusions and discussions}

In this work, $N$-soliton solutions of the tcCH equations \eqref{Q1} have been obtained via solving the resulting RH problem based on the inverse scattering theory, which is Eq. \eqref{Q34}. We  have first gotten the analytical property of Jost function via analyzing the lax pair of the tcCH equations. Moreover, the symmetry of the constructed scattering matrix and the time-spatial revolutions of the scattering data has been obtained. According to the above conditions, the RH problem corresponding to the equation has been constructed. Finally, the solution of $N$-soliton solutions with reflection-less  has been calculated.
In addition, some new phenomenon for soliton collision has been presented including localized structures and dynamic behaviors of one- and two- soliton solutions. It is hoped that our results can help enrich the nonlinear dynamics of the $N$-component nonlinear Schr\"{o}dinger type equations.

\section*{Acknowledgements}
This work was supported by the Postgraduate Research and Practice of Educational Reform for Graduate students in CUMT under Grant No. 2019YJSJG046, the Natural Science Foundation of Jiangsu Province under Grant No. BK20181351, the Six Talent Peaks Project in Jiangsu Province under Grant No. JY-059, the Qinglan Project of Jiangsu Province of China, the National Natural Science Foundation of China under Grant No. 11975306, the Fundamental Research Fund for the Central Universities under the Grant Nos. 2019ZDPY07 and 2019QNA35, and the General Financial Grant from the China Postdoctoral Science Foundation under Grant Nos. 2015M570498 and 2017T100413.

\end{document}